\documentclass[11pt]{article}

\usepackage{geometry}
\usepackage[numbers]{natbib}
\usepackage{amsfonts,amsmath,amssymb,amsthm}
\usepackage{graphicx}
\usepackage[colorlinks,citecolor=green,linkcolor=blue,bookmarks=false,hypertexnames=true]{hyperref}
\usepackage[color=yellow]{todonotes}

\newtheorem{proposition}{Proposition}
\newtheorem{remark}{Remark}
\newtheorem{assumption}{Assumption}
\newtheorem{definition}{Definition}

\allowdisplaybreaks
\usepackage{titling}
\usepackage{array}
\preauthor{\begin{center}
    \large \lineskip .75em%
        \begin{tabular}[t]{>{\centering\arraybackslash}p{.45\textwidth}}}
        \postauthor{\end{tabular}\par\end{center}}
\makeatletter
\renewcommand\and{
  \end{tabular}%
  \hfill
  \begin{tabular}[t]{>{\centering\arraybackslash}p{.45\textwidth}}}
\makeatother

\begin{document}

\title{A new formulation of metriplectic dynamics with an application to quasigeostrophic ocean modeling with advected quantities}

\author{F.J.\ Beron-Vera\thanks{Corresponding author.}\\ Department of Atmospheric Sciences\\ Rosenstiel School of Marine, Atmospheric \& Earth Science\\ University of Miami\\ Miami, Florida, USA\\ \href{mailto:fberon@miami.edu}{fberon@miami.edu} \and E.\ Luesink\\ Korteweg-De Vries Institute\\ University of Amsterdam\\ Amsterdam, The Netherlands\\ \href{mailto:e.luesink@uva.nl}{e.luesink@uva.nl}}

\date{Started: September 2, 2025.  This version: \today.}

\maketitle

\begin{abstract}
    A general formulation of metriplectic dynamics is presented, where the metriplectic four-bracket is constructed by multiplying two skew-symmetric brackets. The new formulation is then used to introduce irreversibility in a generalized two-dimensional (2D) quasigeostrophic (QG) upper-ocean model involving advected quantities, with the thermal QG model as a special case. By construction, the resulting dynamics ensure the conservation of internal energy and the generation of entropy, in accordance with the first and second laws of thermodynamics. Our metriplectic dynamics formulation allows for a flexible specification of irreversibility, ranging from a type that results in nearly material conservation of potential vorticity to the representation of realistic forcing and dissipation in 2D QG ocean modeling with advected quantities.
\end{abstract}

\tableofcontents

\section{Introduction}

In seminal works by \cite{Willems-72}, \cite{Kaufman-84}, \cite{Morrison-86}, \cite{Grmela-Ottinger-97} and \cite{Ottinger-Grmela-97}, frameworks were introduced to describe dynamical systems exhibiting \emph{irreversible} dynamics.
\begin{definition}\label{def:reversible}
    A dynamical system is characterized as \textbf{irreversible} if its dynamics adhere to the first and second laws of thermodynamics, meaning that it conserves internal energy while producing entropy. On the other hand, a system is classified as \textbf{reversible} if both internal energy and entropy are preserved.
\end{definition}
The framework proposed in \cite{Kaufman-84, Morrison-86} combines a generalized Hamiltonian system, which yields conservative, and hence \emph{reversible}, dynamics and a gradient dynamical system in a metric space, which produces dissipative dynamics, to form what was termed a \emph{metriplectic} system. The phase space of a metriplectic system is equipped with a bracket composed of the sum of a Poisson bracket, which is skew-symmetric and satisfies the Jacobi identity, and a symmetric bracket. The symmetric bracket was proposed to be constructed by choosing the metric tensor in such a way to ensure irreversibility when the energy is taken to be the Hamiltonian and the entropy as the Casimir of the Poisson bracket.

In \cite{Morrison-Updike-24}, a framework for introducing irreversibility into an otherwise reversible system was presented by introducing a \emph{metriplectic four-bracket}. Like the Poisson bracket defined on phase space functions, the four-bracket is characterized by its four slots with symmetries and properties motivated by Riemannian curvature. The Hamiltonian ($\mathcal H$) of the reversible system and the Casimir ($\mathcal C$) of the corresponding Poisson bracket $\{\,,\hspace{.05cm}\}$ are then used to induce a symmetric, positive-semidefinite two-bracket $(f,g)_{\mathcal H}$ for any functionals $f,g$. This formulation ensures that irreversible dynamics with respect to internal energy $\mathcal H$ and entropy $\mathcal C$ are controlled by $\dot f = \{f,\mathcal H\} + (\mathcal C,f)_{\mathcal H}$.

Another important framework for building irreversibility is GENERIC (General Equation for Non-Equilibrium Reversible–Irreversible Coupling) \cite{Grmela-Ottinger-97, Ottinger-Grmela-97}. As originally conceived, GENERIC does not fit directly into the metriplectic bracket framework of \cite{Morrison-Updike-24}. However, in \cite{Morrison-Updike-24}, a procedure is given for turning GENERIC brackets into equivalents of metriplectic brackets.

In an important departure from \cite{Morrison-Updike-24}, \cite{Luesink-25} constructs irreversible dynamics based on a deformation of the underlying Poisson bracket without requiring a metric structure, as in \cite{Morrison-Updike-24} or \cite{Bloch-eal-24}. 

The aim of this paper is twofold. The first objective is to introduce a new general formulation of metriplectic dynamics. This formulation builds on the work of \cite{Luesink-25} and uses the product of two skew-symmetric brackets to construct the metriplectic four-bracket, thus it is characterized by its simplicity. The second objective is to use this formulation to introduce irreversibility in a two-dimensional (2D) model that describes low-frequency, i.e., quasigeostrophic (QG), variability in the upper layer of the ocean, including an arbitrary number of advected quantities. The 2D QG model with advected quantities is developed based on a proposal made in \cite{Beron-Luesink-25} to generalize a 2D QG model with lateral buoyancy (temperature) variations and stratification, as developed in \cite{Beron-21-POFb}. This form of ocean modeling, referred to as thermal ocean modeling \cite{Warneford-Dellar-13}, which was popular throughout the 1980s and 1990s \cite{Schopf-Cane-83, McCreary-etal-93, Fukamachi-etal-95, Ripa-JFM-95}, has experienced a significant resurgence in recent years \cite{Holm-etal-21, Lahaye-etal-24, Beron-24-POFa} due to its ability to conceptually explain observed behavior.

The remainder of the paper is organized as follows. Section~\ref{sec:formulation} presents a general formulation of irreversible, thermodynamically consistent dynamics, covering both the case in which the Hamiltonian is not invariant under a symmetry transformation (Section~\ref{sec:nonsym}) and the case in which it is invariant (Section~\ref{sec:sym}). The formulation assumes that the configuration space is an arbitrary finite-dimensional manifold, but the results of Section~\ref{sec:sym} are restricted to two space dimensions, a simplification that serves our aim of specializing the formulation to advQG dynamics. This specialization is developed in Section~\ref{sec:il-reversible}, where key properties of the model are reviewed. In Section~\ref{sec:il-irreversible}, we construct irreversible dynamics for this model based on the proposed formulation. Section~\ref{sec:test} provides a numerical test of the resulting metriplectic dynamics. Conclusions and outlook are presented in Section~\ref{sec:con}. Finally, Appendix~\ref{app:qg} discusses an extension to the standard 2D QG model, which, including no advected quantities, produces dynamics that reside on an invariant subspace of advQG dynamics.

\section{A general formulation of irreversible dynamics}\label{sec:formulation}

\subsection{Setup}

Let $M$ be a finite-dimensional smooth, connected, Riemannian manifold coordinatized by $\mathbf x$ with volume form $\mu(d\mathbf x)$; let $t \in \mathbb{R}$ represent time; and let $u(\mathbf x,t)$ symbolize an $n$-tuple, $n \ge 1$ finite, of smooth time-dependent real functions (scalar fields) on the \emph{configuration space} $\Omega \subseteq M$. The vector space formed by such functions is denoted as $C^\infty(\Omega)$, and we reserve the notation $\mathfrak{F}_\Omega^n$ for the infinite-dimensional \emph{phase space}, which is composed of $n$ Cartesian copies of $C^\infty(\Omega)$. Thus, $u_i \in C^\infty(\Omega)$, while $u \in \mathfrak{F}_\Omega^n = C^\infty(\Omega)^n$. We leave the topology on $\mathfrak F_\Omega^n$ unspecified, except that we equip $\mathfrak F_\Omega^n$ with an $L^2$ inner product $\langle \, , \hspace{.01cm} \rangle : \mathfrak F_\Omega^n \times \mathfrak F_\Omega^n \to \mathbb{R}$, defined in the usual way as
\begin{equation}
    \langle u, v\rangle := \sum_{i,j=1}^n\int_\Omega g_{ij}u_iv^j\,\mu(d\mathbf x) 
\end{equation}
for all $u,v \in \mathfrak F_\Omega^n$. Here we have used the Riemannian metric inside the integral. If the metric tensor is diagonal, the double sum collapses to a single sum. If $\Omega = M$ for $M$ noncompact, then sufficient fall-off conditions are assumed for all $u \in \mathfrak{F}_\Omega^n$, ensuring that boundary terms arising from integration by parts over $\Omega$ vanish identically. In case $\Omega=M$ for compact $M$ without boundary, we need not put boundary conditions. If $\Omega \subset M$, we will assume that it is bounded by $\partial\Omega$. In such a case, $u$ must satisfy appropriate conditions on $\partial\Omega$ for such an operation to be meaningful. A functional $f$ of $u$, that is, a smooth map $u \in \mathfrak F_\Omega^n \to \mathbb R$, will be denoted $f[u]$.  The vector space of functionals on $\mathfrak F_\Omega^n$ will therefore be denoted $C^\infty(\mathfrak F_\Omega^n)$. Here, we consider functionals of $u$ (henceforth implicitly understood when not stated explicitly) of the form
\begin{equation}
    f[u] = \int_\Omega F(u,\nabla u,\nabla\nabla u,\dotsc,\mathbf x)\,\mu(d\mathbf x)
\end{equation}
for some (smooth) function $F$. The first variation of $f$ at $u$ in the arbitrary direction $\delta u$, called a variation of $u$, is defined as the linear map
\begin{equation}
    \delta f[u] \cdot \delta u := \left.\frac{d}{d\varepsilon}\right\vert_{\varepsilon=0}f[ u+\varepsilon \delta u] = \left\langle\frac{\delta f}{\delta u}, \delta u\right\rangle.
    \label{eq:dfdu}
\end{equation}
Here, $\frac{\delta f}{\delta u} \in \mathfrak F_\Omega^n$ provides a unique representation for $\delta f[u] \in C^\infty(\mathfrak F_\Omega^n)$, as guaranteed by the Riesz theorem, and is known as the functional derivative of $f$ at $u$.  

Consider the \emph{Poisson bracket} $\{\,,\hspace{.01cm}\} : C^\infty(\mathfrak F_\Omega^n) \times  C^\infty(\mathfrak F_\Omega^n) \to  C^\infty(\mathfrak F_\Omega^n)$ \cite[e.g.,][Section IV.A]{Morrison-98}
\begin{equation}
    \{f,g\} : = \left\langle\frac{\delta f}{\delta u}, \mathbf J \frac{\delta g}{\delta u}\right\rangle,
\end{equation}
where $\mathbf{J}(u)(\hspace{.15cm})$ is a skew-adjoint bivector operator, referred to as a \emph{Poisson operator}. More precisely, the Poisson bracket satisfies, for any functionals $f,g,h$ and $\lambda \in \mathbb R$, the following properties: 
\begin{enumerate}
    \item[\textbf{P1.}] linearity in all arguments, e.g., $\{f+\lambda g,h\} = \{f,h\} + \lambda \{g,h\}$; 
    \item[\textbf{P2.}] skew-symmetry, viz., $\{f,g\} = - \{g,f\}$; 
    \item[\textbf{P3.}] Jacobi identity, viz., $\{\!\{f,g\},h\} + \{\!\{h,f\},g\} + \{\!\{g,h\},f\} = 0$; and 
    \item[\textbf{P4.}] Leibniz rule in each argument, e.g., $\{fg,h\} = \{f,h\}g + f\{g,h\}$. 
\end{enumerate}
Property P1 follows from the linearity of $\mathbf J$, whereas property P4 follows from this and the property of the (functional) derivative. Property P2 is implied by the skew-adjointness of $\mathbf J$, while P3 imposes a restriction on $\mathbf J$ beyond skew-adjointness. The pair $(\mathfrak F_\Omega^n, \{\,,\hspace{.01cm}\})$ forms a \emph{Poisson manifold}. The Jacobi identity turns $(C^{\infty}(\mathfrak F_\Omega^n), \{\,,\hspace{.01cm}\})$ into a \emph{Lie algebra}, while the Leibniz rule further extends it into a \emph{Poisson algebra}. The evolution of an arbitrary functional $f$ under \emph{generalized Hamiltonian dynamics} is controlled by
\begin{equation}
    \dot f = \{f,\mathcal H\} \Longleftrightarrow \partial_t u = \mathbf J\frac{\delta \mathcal H}{\delta u},
    \label{eq:sys}
\end{equation}
where $\mathcal H$ is the \emph{Hamiltonian}. By the skew-symmetry of the Poisson bracket, the Hamiltonian is an integral of the motion:
\begin{equation}
    \dot{\mathcal H} = \{\mathcal H,\mathcal H\} = 0.
\end{equation}

Let $\mathcal M$ satisfy, for $\varepsilon > 0$ small,
\begin{equation}
    \varepsilon \{f,\mathcal M\} = \delta_{\mathcal M}f
    \label{eq:inf}
\end{equation}
for any functional $f$, where $\delta_{\mathcal M}f$ is the infinitesimal transformation induced by a (Lie) group action on space-time domain $\Omega \times \mathbb R$.  In \cite{Shepherd-90}, the functional $\mathcal M$ is referred to as the \emph{generator of an infinitesimal transformation}. This shares a lineage with the notion of momentum map, which plays a central role in the study of finite-dimensional Hamiltonian systems with continuous symmetry in connection with Noether's theorem \cite[cf., e.g.,][Chapter 11]{Marsden-Ratiu-99}; we return to this below. Evidently, $\mathcal M$ is conserved under the Hamiltonian dynamics \eqref{eq:sys} if $\delta_{\mathcal M}\mathcal H = 0$, that is, the Hamiltonian is symmetric with respect to the group action, which represents \emph{Noether's theorem on the Hamiltonian side}.  Now, differentiate \eqref{eq:inf} with respect to time and subtract $\delta_{\mathcal M}\dot f$ from the result to find:
\begin{equation}
    \tfrac{d}{dt}\delta_{\mathcal M}f - \delta_{\mathcal M}\dot f = \varepsilon\{f,\dot{\mathcal M}\}.
\end{equation}
If $\mathcal M$ is conserved, then applying a transformation and allowing time evolution are independent of the order in which these operations are performed, indicating a quite general symmetry. The reverse does not necessarily hold: if these operations commute, then $\mathcal M$ is equal to a function of a functional $\mathcal C$, commonly referred to as a \emph{Casimir}, forming the kernel of the Poisson bracket. Namely, it satisfies
\begin{equation}
    \{f,\mathcal C\} = 0\, \forall f \Longleftrightarrow \mathbf J\frac{\delta \mathcal C}{\delta u} = 0.
\end{equation}
Clearly, Casimirs are conserved under the Hamiltonian dynamics:
\begin{equation}
    \dot{\mathcal C} = \{\mathcal C,\mathcal H\} = 0.
\end{equation}
This may be regarded as a broader formulation of Noether's theorem, as discussed in \cite{Ripa-RMF-92a, Ripa-JFM-92b}.   

Now, consider an additional bilinear bracket $\{\hspace{-.115cm}\{\,,\hspace{.01cm}\}\hspace{-.115cm}\} : C^\infty(\mathfrak F_\Omega^n) \times C^\infty(\mathfrak F_\Omega^n) \to C^\infty(\mathfrak F_\Omega^n)$ that satisfies the Leibniz rule, but is only required to be skew-symmetric:
\begin{equation}
    \{\hspace{-.115cm}\{f,g\}\hspace{-.115cm}\} := \left\langle\frac{\delta f}{\delta u}, \mathbb J\frac{\delta g}{\delta u}\right\rangle = -\{\hspace{-.115cm}\{g,f\}\hspace{-.115cm}\}
    \label{eq:bra2}
\end{equation}
for any functionals $f,g$ and some (skew-adjoint bivector) operator $\mathbb J$. Using this bracket, define a four-bracket $(\,,\,;\,,\,) : C^\infty(\mathfrak F_\Omega^n) \times C^\infty(\mathfrak F_\Omega^n) \times C^\infty(\mathfrak F_\Omega^n) \times C^\infty(\mathfrak F_\Omega^n) \to C^\infty(\mathfrak F_\Omega^n)$ by:
\begin{equation}
    (f,h;g,k) := \{\hspace{-.115cm}\{f,h\}\hspace{-.115cm}\}\{\hspace{-.115cm}\{g,k\}\hspace{-.115cm}\}
    \label{eq:four}
\end{equation}
for all functionals $f,g,h,k$.  This bracket is quadrilinear and satisfies the Leibniz rule, both properties being implied by $\{\hspace{-.115cm}\{\,,\hspace{.01cm}\}\hspace{-.115cm}\}$. Define a last bilinear bracket $(\,,\hspace{.01cm})_h : C^\infty(\mathfrak F_\Omega^n) \times C^\infty(\mathfrak F_\Omega^n) \to C^\infty(\mathfrak F_\Omega^n)$ using the four-bracket \eqref{eq:four}, thereby representing a derivation, by setting its third and fourth arguments equal, namely,
\begin{equation}
    (f,g)_h := (f,h;g,h).
    \label{eq:two}
\end{equation}
By construction, this bracket satisfies, for any functionals $f,g,h$, the following properties:
\begin{enumerate}
    \item[\textbf{M1.}] $(f,g)_h = (g,f)_h$ (symmetry);
    \item[\textbf{M2.}] $(f,h)_h = 0$; and
    \item[\textbf{M3.}] $(f,f)_h \ge 0$ (positive-semidefiniteness).
\end{enumerate}
These represent the axioms of \emph{metriplecticity} \cite{Morrison-Updike-24}.  When properties M1--M3 hold, we refer to \eqref{eq:four} as the \emph{metriplectic four-bracket} and \eqref{eq:two} as the \emph{metriplectic induced two-bracket}, respectively.

In conclusion, we acknowledge that the metriplectic four-bracket \eqref{eq:four} is inspired by \cite{Luesink-25}. This work presents a geometric construction of irreversible dynamics on Poisson manifolds, where a metriplectic four-bracket is obtained by multiplying two Poisson 2-cocycles, derived by deforming the Poisson algebra via central extension. Our mathematical construction is similar, in that we introduce two antisymmetric objects and take their product. It is not necessary that these objects themselves are Poisson brackets. We note that the approach of \cite{Luesink-25} differs from that of \cite{Morrison-Updike-24, Zaidni-Morrison-25}. The authors construct four-brackets from the Kulkarni--Nomizu (KN) product of two symmetric bilinear brackets. Introduced in Riemannian geometry \cite{Lee-18}, the KN product of two symmetric tensors possesses the same algebraic properties as the Riemann tensor. These properties are analogous to the metriplecticity properties M1--M2, which inspired its application.

\subsection{Nonsymmetric Hamiltonian systems}\label{sec:nonsym}

Assume that the Hamiltonian is reduced, that is, all its symmetries have been removed by changing to appropriate variables using the momentum maps associated with the symmetries. Of course, it remains symmetric under time shifts, since the Hamiltonian is itself the conserved infinitesimal generator of such symmetry shifts.

\begin{proposition}\label{pro:nonsym}
    Let the internal energy, $\mathcal U$, be taken to be the Hamiltonian and the entropy, $\mathcal S$, be given by a Casimir. Irreversible dynamics are controlled by 
    \begin{subequations}\label{eq:sys-irr}
    \begin{equation}
        \dot f = \{f,\mathcal H\} + (\mathcal C,f)_{\mathcal H}
    \end{equation}
    for any functional $f$ and in particular for $f =  u$,
    \begin{equation}
       \partial_t u = \big(\mathbf J + \{\hspace{-.115cm}\{\mathcal C,\mathcal H\}\hspace{-.115cm}\}\mathbb J\big)\frac{\delta \mathcal H}{\delta u}.
    \end{equation}
    \end{subequations}
\end{proposition}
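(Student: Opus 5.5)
The proof reduces to two checks: that the dynamics in \eqref{eq:sys-irr} conserve $\mathcal H=\mathcal U$ and produce $\mathcal C=\mathcal S$ in the sense of Definition~\ref{def:reversible}, and that the functional form of the evolution is equivalent to the field equation for $u$. Both follow from the product structure of the four-bracket \eqref{eq:four}. We never use the Jacobi identity for $\{\hspace{-.115cm}\{\,,\hspace{.01cm}\}\hspace{-.115cm}\}$, only its skew-symmetry \eqref{eq:bra2} and bilinearity.

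First, unpack the induced two-bracket. By \eqref{eq:two} and \eqref{eq:four},
\begin{equation*}
(\mathcal C,f)_{\mathcal H}=(\mathcal C,\mathcal H;f,\mathcal H)=\{\hspace{-.115cm}\{\mathcal C,\mathcal H\}\hspace{-.115cm}\}\,\{\hspace{-.115cm}\{f,\mathcal H\}\hspace{-.115cm}\}
=\{\hspace{-.115cm}\{\mathcal C,\mathcal H\}\hspace{-.115cm}\}\left\langle\frac{\delta f}{\delta u},\mathbb J\frac{\delta \mathcal H}{\delta u}\right\rangle .
\end{equation*}
The prefactor $\{\hspace{-.115cm}\{\mathcal C,\mathcal H\}\hspace{-.115cm}\}$ is a real number for each state $u$, independent of $f$, so it can be moved inside the inner product. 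Adding the Poisson part gives
\begin{equation*}
\dot f=\left\langle\frac{\delta f}{\delta u},\Big(\mathbf J+\{\hspace{-.115cm}\{\mathcal C,\mathcal H\}\hspace{-.115cm}\}\mathbb J\Big)\frac{\delta \mathcal H}{\delta u}\right\rangle .
\end{equation*}
Now compare with the chain rule $\dot f=\langle \delta f/\delta u,\partial_t u\rangle$ from \eqref{eq:dfdu}. Since this holds for every functional $f$, the uniqueness of the functional derivative (Riesz) yields the stated equation for $\partial_t u$. Equivalently, take $f$ to be the evaluation functionals $u_i(\mathbf x)$.

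Next, verify the two thermodynamic laws using M1--M3.
\begin{itemize}
\item \emph{Energy.} Set $f=\mathcal H$. Then $\{\mathcal H,\mathcal H\}=0$ by P2, and $(\mathcal C,\mathcal H)_{\mathcal H}=0$ by M1 and M2; concretely, $\{\hspace{-.115cm}\{\mathcal H,\mathcal H\}\hspace{-.115cm}\}=0$ by skew-symmetry. Hence $\dot{\mathcal U}=0$.
\item \emph{Entropy.} Set $f=\mathcal C$. Then $\{\mathcal C,\mathcal H\}=0$ because $\mathcal C$ is a Casimir, and $\dot{\mathcal S}=(\mathcal C,\mathcal C)_{\mathcal H}=\{\hspace{-.115cm}\{\mathcal C,\mathcal H\}\hspace{-.115cm}\}^2\ge 0$, which is M3.
\end{itemize}
So the dynamics are irreversible in the sense of Definition~\ref{def:reversible}. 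It is also worth noting that reversibility holds exactly when $\{\hspace{-.115cm}\{\mathcal C,\mathcal H\}\hspace{-.115cm}\}=0$.

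The main point needing care is the step that pulls $\{\hspace{-.115cm}\{\mathcal C,\mathcal H\}\hspace{-.115cm}\}$ out as a scalar multiplier of $\mathbb J$. This factor depends on $u$, so the resulting operator is state-dependent and nonlocal, but that causes no difficulty: the identification of $\partial_t u$ is made pointwise in time and for fixed $u$. I would also state explicitly that M1--M3 follow directly from the product form. M1 holds because the bracket is symmetric under swapping its two factors, M2 because $\{\hspace{-.115cm}\{h,h\}\hspace{-.115cm}\}=0$, and M3 because the product is a square. With these in hand, no further structural assumptions are needed.
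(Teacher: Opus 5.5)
Your proof is correct and follows the same route as the paper's: $\dot{\mathcal U}=0$ from skew-symmetry of the Poisson bracket together with M2, and $\dot{\mathcal S}\ge 0$ from the Casimir property together with M3, which you make explicit as $\dot{\mathcal S}=\{\hspace{-.115cm}\{\mathcal C,\mathcal H\}\hspace{-.115cm}\}^2$. Your extra steps fill in what the paper treats as immediate ``by construction'': deriving the field equation for $u$ by pairing with arbitrary test functionals, and checking M1--M3 from the product form. One small point is that M1 is not needed for the energy step, since M2 gives $(\mathcal C,\mathcal H)_{\mathcal H}=0$ directly.
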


\begin{proof}
    Let $\mathcal U = \mathcal H$. Then, $\{\mathcal U,\mathcal H\} = 0$ due to the skew-symmetry of the Poisson bracket, and $(\mathcal C,\mathcal U)_{\mathcal H} = 0$ due to the metriplecticity of $(\,,\hspace{.01cm})_{\mathcal H}$. Consequently, $\mathcal U$ is conserved, regardless of the nature of $\mathcal S$. Now, let $\mathcal S = \mathcal C$. Then, we have $\{\mathcal S,\mathcal H\} = 0$ because $\mathcal C$ Poisson commutes, and $(\mathcal S,\mathcal S)_{\mathcal H} \ge 0$ by metriplecticity. Consequently, $\mathcal S$ is produced or preserved, but never consumed.
\end{proof} 

\begin{remark}\label{rem:nonsym}
    A few observations are in order. 
    \begin{enumerate}
        \item Set
        \begin{equation}
            f = \mathcal H - T\mathcal C \equiv \mathcal U - T\mathcal S =: \mathcal A, 
        \end{equation}
        where $T$ is a constant. The functional $\mathcal A$ can be interpreted as a \textup{(}Helmholtz\textup{)} free energy with $T$ playing the role of absolute temperature, as discussed in \textup{\cite{Morrison-86}}.  In essence, free energy represents the availability of energy in a thermodynamic system to perform useful work at a constant temperature and volume. While internal energy is conserved, spontaneous processes reduce this availability as entropy increases, so that the free energy decreases and reaches a minimum at equilibrium. Consistent with this, from \textup{(}\ref{eq:sys-irr}a\textup{)} it follows that $\dot{\mathcal A} \le 0$. 
        \item As noted in \textup{\cite{Morrison-86}}, conditional equilibria of the reversible dynamics produced by the Hamiltonian system \eqref{eq:sys}, namely, $u = U$ such that
        \begin{equation}
            \left.\frac{\delta \mathcal A}{\delta u}\right\vert_{u=U} = 0 \Longrightarrow \partial_t U = 0,
        \end{equation}
        represent absolute equilibria of the irreversible dynamics produced by the metriplectic system \eqref{eq:sys-irr}.
        \item The equation $\dot{f} = \{f,\mathcal H\} + (\mathcal C,f)_{\mathcal C}$ evidently generates irreversible dynamics with consistent thermodynamics with respect to $\mathcal U = \mathcal C$ and $\mathcal S = \mathcal H$. However, the interpretation of $\mathcal U$ as a internal energy is lost.
    \end{enumerate} 
\end{remark}

\subsection{Symmetric Hamiltonian systems}\label{sec:sym}

We are interested in the case where the Hamiltonian is invariant under a group action on space domain $\Omega$ by translation. For simplicity, we consider $M = \mathbb{R}^2$, which suits our purposes well, as the type of irreversible ocean dynamics with thermodynamic consistency we seek to construct is defined in two dimensions.  Call $\chi$ the direction on which such a symmetry group acts.  For instance, if $\Omega$ has axial symmetry, then such a direction is the azimuthal direction in polar coordinates, and a translation in that direction represents a rotation. Denote by $\mathcal{M}^\chi$ the \emph{$\chi$-momentum}, which generates the infinitesimal symmetry transformation $\chi \mapsto \chi + \varepsilon$. Under this transformation, $u \mapsto u + \varepsilon \partial_\chi u$, implying $\delta_{\mathcal{M}^\chi} u = \varepsilon \partial_\chi u$. Then, from \eqref{eq:inf}, it follows that $\mathcal{M}^\chi$ satisfies
\begin{equation}
    \partial_\chi u = \mathbf J\frac{\delta \mathcal M^\chi}{\delta u},
    \label{eq:M}
\end{equation}
and, by assumption, is preserved under the Hamiltonian dynamics. (The \emph{$\chi$-momentum} is sometimes \cite{Shepherd-90} defined as minus $-\mathcal M^\chi$.) Consequently, 
\begin{equation}
    \mathcal H_{\dot \chi} : = \mathcal H + \dot \chi \mathcal M^\chi, \quad \dot \chi = \text{const},
    \label{eq:pseudo}
\end{equation}
is a Hamiltonian for the dynamics as observed from a reference frame that steadily translates or rotates in the $\chi$-direction at linear or angular speed $\dot \chi$.  Specifically,
\begin{equation}
    \partial_t u + \dot \chi \partial_\chi u = \mathbf J\frac{\delta \mathcal H_{\dot \chi}}{\delta u}
\end{equation}
of, for any functional $f$,
\begin{equation}
    \dot f = \{f,\mathcal H_{\dot \chi}\}.
\end{equation}
We refer to $\mathcal H_{\dot \chi}$ in \eqref{eq:pseudo} as an \emph{$\chi$-symmetry induced Hamiltonian}. 

\begin{proposition}\label{pro:sym}
    Let $\mathcal U = \mathcal H_{\dot \chi}$ and $\mathcal S = \mathcal C$. As seen from a frame moving in the $\chi$-direction at constant speed $\dot s$, irreversible dynamics are controlled by 
    \begin{equation}
        \dot f = \{f,\mathcal H_{\dot \chi}\} + (\mathcal C,f)_{\mathcal H_{\dot \chi}}
    \end{equation}
    for any functional $f$ and in particular for $f =  u$,
    \begin{equation}
       \partial_t u + \dot \chi \partial_\chi u = \big(\mathbf J + \{\hspace{-.115cm}\{\mathcal C,\mathcal H_{\dot \chi}\}\hspace{-.115cm}\}\mathbb J\big)\frac{\delta \mathcal H_{\dot \chi}}{\delta u}.
\end{equation}
\end{proposition}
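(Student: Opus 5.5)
The plan is to reduce Proposition~\ref{pro:sym} to Proposition~\ref{pro:nonsym}, with $\mathcal H_{\dot \chi}$ playing the role of $\mathcal H$. Two facts are needed. First, $\mathcal H_{\dot \chi}$ generates the reversible dynamics seen from the moving frame, i.e.\ $\dot f = \{f,\mathcal H_{\dot \chi}\}$, equivalently $\partial_t u + \dot \chi \partial_\chi u = \mathbf J\,\delta \mathcal H_{\dot \chi}/\delta u$. This was derived just above from \eqref{eq:M} and \eqref{eq:pseudo}, using that $\dot \chi$ is constant, so that $\delta \mathcal H_{\dot \chi}/\delta u = \delta \mathcal H/\delta u + \dot \chi\, \delta \mathcal M^\chi/\delta u$ and $\mathbf J$ acts linearly. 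Second, $\mathcal C$ is still a Casimir of the same bracket. This holds because the Poisson bracket $\{\,,\hspace{.01cm}\}$ is unchanged; only the generating functional changes. Note also that the ``$\dot s$'' in the statement should be read as $\dot \chi$.

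With these facts, I will verify the two thermodynamic laws exactly as in the proof of Proposition~\ref{pro:nonsym}, which relies only on M1--M3 and on properties of $\{\,,\hspace{.01cm}\}$ that hold for any choice of $h$.
\begin{itemize}
\item[\textbf{Energy.}] For $\mathcal U = \mathcal H_{\dot \chi}$, skew-symmetry P2 gives $\{\mathcal U,\mathcal H_{\dot \chi}\}=0$. Property M2 with $h=\mathcal H_{\dot \chi}$, together with the symmetry M1, gives $(\mathcal C,\mathcal U)_{\mathcal H_{\dot \chi}} = 0$. Hence $\dot{\mathcal U}=0$.
\item[\textbf{Entropy.}] For $\mathcal S=\mathcal C$, we have $\{\mathcal S,\mathcal H_{\dot \chi}\}=0$ because $\mathcal C$ is a Casimir, and $(\mathcal C,\mathcal C)_{\mathcal H_{\dot \chi}}\ge 0$ by M3. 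Hence $\dot{\mathcal S}\ge0$.
\end{itemize}
Since $\mathcal M^\chi$ is conserved by the reversible flow, $\mathcal U$ is a legitimate conserved energy in the moving frame. Only M2 is needed, however, so conservation of $\mathcal M^\chi$ is not used in the irreversible part.

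Next I derive the field equation. Using definitions \eqref{eq:four}--\eqref{eq:two},
\begin{equation*}
(\mathcal C,f)_{\mathcal H_{\dot \chi}} = \{\hspace{-.115cm}\{\mathcal C,\mathcal H_{\dot \chi}\}\hspace{-.115cm}\}\,\{\hspace{-.115cm}\{f,\mathcal H_{\dot \chi}\}\hspace{-.115cm}\} = \left\langle \frac{\delta f}{\delta u}, \{\hspace{-.115cm}\{\mathcal C,\mathcal H_{\dot \chi}\}\hspace{-.115cm}\}\,\mathbb J\frac{\delta \mathcal H_{\dot \chi}}{\delta u}\right\rangle .
\end{equation*}
Here the scalar factor $\{\hspace{-.115cm}\{\mathcal C,\mathcal H_{\dot \chi}\}\hspace{-.115cm}\}$ is pulled inside the inner product. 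Adding $\langle \delta f/\delta u, \mathbf J\,\delta\mathcal H_{\dot \chi}/\delta u\rangle$ gives the total rate. I then take $f$ to be the evaluation functional $u(\mathbf x)$, or use arbitrariness of $\delta f/\delta u$ together with the Riesz representation \eqref{eq:dfdu}. Combined with the moving-frame form of the left-hand side, this yields the stated PDE.

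The main obstacle is the left-hand side. I must justify that $\dot f$ computed with $\mathcal H_{\dot \chi}$ corresponds to $\partial_t u + \dot \chi\partial_\chi u$ rather than $\partial_t u$, that is, that $\dot f$ is the time derivative in the comoving frame. I will handle this by invoking \eqref{eq:M}: the term $\dot \chi\,\mathbf J\,\delta\mathcal M^\chi/\delta u = \dot \chi\,\partial_\chi u$ is exactly the Galilean (or rotational) advection induced by the frame change. The irreversible term involves only $\mathbb J$ and is frame-independent, so it is simply appended.
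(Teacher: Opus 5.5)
Your proposal is correct and matches the paper's proof, which reruns the argument of Proposition~\ref{pro:nonsym} with $\mathcal H$ replaced by $\mathcal H_{\dot\chi}$: skew-symmetry and M2 give $\dot{\mathcal U}=0$, and the Casimir property and M3 give $\dot{\mathcal S}\ge 0$; you additionally spell out the field equation via \eqref{eq:four}--\eqref{eq:two} and \eqref{eq:M}, which the paper leaves implicit. The one unnecessary step is your appeal to ``frame-independence'' of the $\mathbb J$ term, since the proposition simply postulates the metriplectic dynamics in the comoving frame, where $\partial_t u+\dot\chi\partial_\chi u$ is by definition the comoving time derivative.
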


\begin{proof}
    The proof mirrors that of Proposition \ref{pro:nonsym}, substituting $\mathcal H$ with $\mathcal H_{\dot \chi}$.
\end{proof} 

Finally, we note that Remark \ref{rem:nonsym}.2 remains the same, and Remark \ref{rem:nonsym}.3 holds true, mutatis mutandis, with $\mathcal{H}$ replaced by $\mathcal{H}_{\dot{\chi}}$, while Remark \ref{rem:nonsym}.1 must be reinterpreted with this replacement.

\section{Reversible advQG dynamics}\label{sec:il-reversible}

In this section, we develop the reversible form of what we have termed \emph{2D QG dynamics with advected quantities} or \emph{advQG dynamics}. This is done by building on a proposition from our recent work \cite{Beron-Luesink-25} to generalize a 2D QG model with lateral density inhomogeneity and stratification \cite{Beron-21-POFb}. The model of \cite{Beron-21-POFb} was derived to improve the description of (low-frequency) variability in the upper layer ocean. For future reference, the model of \cite{Beron-21-POFb} was called IL$^{(0,\alpha)}$QG to reflect that it consists of an Inhomogeneous Layer of fluid where, as in the standard 2D QG model \cite{Pedlosky-87}, referred to as HLQG for having a Homogeneous Layer of fluid, the velocity does not vary with the vertical, indicated by the $0$ in the superscript. Unlike the the HLQG, the density varies as a polynomial of order $\alpha \ge 0$ in the vertical, with coefficients that are arbitrary functions of horizontal position and time, as they are advected under the flow.

\subsection{Formulation}

Let $\mathbf{x} = (x,y)$ denote Cartesian position on a fluid domain $\Omega$ in $M = \mathbb{R}^2$, with volume form $\mu(d\mathbf x) = dxdy$, representing the infinite $f$- or $\beta$-plane. The former is characterized by a constant Coriolis parameter $f = f_0$ (twice the local Earth's angular velocity), while the latter is characterized by a Coriolis parameter that varies linearly with latitude ($y$), viz., $f = f_0 + \beta y$. 

Let $\psi(\mathbf x,t) \in C^\infty(\Omega)$ denote \emph{streamfunction} and $a(\mathbf x,t) \in C^\infty(\Omega)^{\alpha+1}$ represent an $(\alpha+1)$-tuple, $\alpha \ge 0$, of \emph{advected quantities}, that is, they satisfy
\begin{equation}
    \partial_t a + J(\psi,a) = 0,
\end{equation}
where 
\begin{equation}
    J(v,w) := \nabla^\perp v\cdot\nabla w  = -\partial_yv\partial_xw + \partial_xv\partial_yw
\end{equation}
for any $v,w \in C^\infty(\Omega)$, which is the Jacobian of the map $\mathbf x \mapsto (v(\mathbf x,t),w(\mathbf x,t))$.  

Let $q(\mathbf x,t) \in C^\infty(\Omega)$ represent the \emph{potential vorticity}, in a generalized sense described by
\begin{equation}
    q = \nabla^2\psi - R^{-2}\psi + A(a) + f,
    \label{eq:q}
\end{equation}
The function $A$ is arbitrary.  The constant $R>0$ is given the interpretation of equivalent barotropic Rossby deformation radius, assuming a \emph{reduced-gravity }setting.  Relevant to describe surface-intensified variability in the ocean, in such a setting the active fluid layer floats atop a quiescent, infinitely deep layer.  

\begin{assumption}\label{ass:Omega}
If $\Omega = \mathbb{R}^2$, we assume $f = f_0$ and require $\psi$ and $a$ to decay sufficiently for integration by parts to be applicable. When $\Omega \subset \mathbb{R}^2$, we assume $\partial \Omega$ represents a solid coast. To ensure complete self-consistency, we consider three possibilities.
\begin{enumerate}
    \item The domain $\Omega$ represents a zonal channel specifically oriented in the $x$ (longitude) direction. It can be either infinitely long, $\Omega \cong \mathbb R \times [0,1]$, or periodic, $\Omega \cong \mathbb R/\mathbb Z \times [0,1]$.  The domain boundary $\partial \Omega$ is the union of the channel's coastlines.
    \item The domain $\Omega$ is bounded and exhibits axial symmetry or symmetry in the azimuthal direction $\left( \vartheta = \tan^{-1}\frac{y}{x} \right)$. In this case, $\Omega \cong S^1$ and we set $f = f_0$.
    \item The domain $\Omega$ is enclosed by an arbitrarily shaped coastline $\partial \Omega_0$ and includes possibly multiple irregular islands, say $I$.  The boundary of the latter, multiply connected domain is $\partial \Omega = \bigcup_{i=0}^I \partial \Omega_i$. In this context, we also restrict to the $f$-plane.
\end{enumerate}
In any scenario, the following conditions hold along $\partial \Omega$ (if nonempty):
\begin{equation}
    \partial_s\psi = 0,\quad
    \partial_sa = 0,
    \label{eq:bc}
\end{equation}
where $\partial_s = \nabla\cdot\hat{\mathbf{n}}^\perp\vert_{\partial \Omega}$, and $\hat{\mathbf{n}}\vert_{\partial \Omega}$ is the outward unit normal to $\partial \Omega$ \textup{(}or to each connected piece\textup{)}.
\end{assumption}

Boundary conditions \eqref{eq:bc} respectively indicate the absence of normal flow through \(\partial\Omega\) and the iso-\(a\) character of \(\partial\Omega\), which are necessary to ensure that integration by parts yields no boundary terms. In other words, these are the natural conditions ensuring that the system has no exchange with the ``environment,'' i.e., that it is a closed system. This allows for the establishment of an Euler--Poincar\'e/Lie--Poisson dual formulation for reversible advQG dynamics, which we discuss below. If we choose to elevate this dual formulation in constructing irreversible advQG dynanics, $\Omega$ may be bounded by an irregular coastline on the $\beta$-plane, including the possibility of the presence of multiple connections, i.e., islands.  In this context, the iso-$a$ nature of $\partial\Omega$, specified by the boundary condition on the right-hand side of \eqref{eq:bc}, is not necessary. An Euler--Poincar\'e/Lie--Poisson dual formulation for reversible advQG dynamics may still be possible, but this would require deviation from the variational calculus implied by \eqref{eq:dfdu}, for example, as discussed in \textup{\cite{Lewis-etal-86}}.

\begin{assumption}\label{ass:gamma}
Suppose that $\Omega \subset \mathbb R^2$ is bounded.  Consider the circulation along $\partial \Omega$, defined by
\begin{equation}
    \gamma := \oint_{\partial \Omega} \nabla^\perp\psi\cdot d\mathbf x.
\end{equation}
We will assume that
\begin{equation}
    \dot\gamma = 0,\quad \delta\gamma = 0.
    \label{eq:gamma}
\end{equation}
\end{assumption}

The condition on the left-hand side of \eqref{eq:gamma} ensures energy conservation when $R < \infty$, meaning the bottom boundary is free. Otherwise, $\dot\gamma = 0$ is inherently satisfied by the dynamics. The condition on the right of \eqref{eq:gamma} enables consistency between the variational calculus and \eqref{eq:dfdu}.  It is implicit that if $\partial \Omega$ is the union of connected pieces as described in Assumptions \ref{ass:Omega}.1 and \ref{ass:Omega}.3, then there will be as many circulations as connected pieces, and conditions \eqref{eq:gamma} will apply to each of them.

Irreversible advQG dynamics are the Hamiltonian dynamics on the phase space variables $u = (q,a) \in C^\infty(\Omega) \times C^\infty(\Omega)^{\alpha+1} = \mathfrak F_\Omega^{\alpha+2}$ produced by the Hamiltonian (energy) given by 
\begin{equation}
    \mathcal H[q,a] := \frac{1}{2}\int_\Omega |\nabla\psi|^2 + R^{-2}\psi^2\,dxdy = \frac{1}{2}\left\langle q,(\nabla^2-R^{-2})^{-1}(f + A(a) - q)\right\rangle
    \label{eq:H}
\end{equation}
and the Poisson bracket with Poisson operator given by
\begin{equation}
    \mathbf J := -
    \begin{bmatrix} 
    J(q,\cdot\hspace{.05cm}) & J(a,\cdot\hspace{.05cm})\\
    J(a,\cdot\hspace{.05cm}) & 0
    \end{bmatrix}.
    \label{eq:J}
\end{equation}
The last equality in \eqref{eq:H} follows from an integration by parts, with $(\nabla^2-R^{-2})^{-1}$ appropriately interpreted using the Green's function of the elliptic problem \eqref{eq:q}. On the other hand, using \eqref{eq:J}, the Poisson bracket takes the explicit form
\begin{subequations}\label{eq:PB}
\begin{align}
    \{f,g\} 
    &= 
    - \left\langle\frac{\delta f}{\delta q}, J\left(q,\frac{\delta g}{\delta q}\right) + J\left(a,\frac{\delta g}{\delta a}\right)\right\rangle
    - \left\langle\frac{\delta f}{\delta a}, J\left(a,\frac{\delta g}{\delta q}\right)\right\rangle
    \label{eq:PB-noLP}\\
    &= \left\langle q, J\left(\frac{\delta f}{\delta q},\frac{\delta g}{\delta q}\right)\right\rangle + \left\langle a, J\left(\frac{\delta f}{\delta q},\frac{\delta g}{\delta a}\right) +J\left(\frac{\delta f}{\delta a},\frac{\delta g}{\delta q}\right) \right\rangle
    \label{eq:PB-LP}
\end{align}
\end{subequations}
for all $f,g \in C^\infty(\mathfrak F_\Omega^{\alpha+2})$, where the last equality follows an integration by parts. Upon computing 
\begin{equation}
    \frac{\delta \mathcal H}{\delta q} = -\psi,\quad
    \frac{\delta \mathcal H}{\delta a} = \psi \partial_aA,
    \label{eq:dHqa}
\end{equation}
the following evolution equations producing advQG dynamics are obtained:
\begin{subequations}\label{eq:advQG}
    \begin{align}
    \partial_tq &= \{q,\mathcal H\} = - J(\psi,q-A),\label{eq:advQG-a}\\
    \partial_ta &= \{a,\mathcal H\} = - J(\psi,a),\label{eq:advQG-b}
\end{align}
\end{subequations}
where the relationship
\begin{equation}
    J(a,\psi\partial_aA) = J(a,\psi)\partial_aA + J(a,\partial_aA)\psi = J(a,\psi)\partial_aA = -J(\psi,A) 
    \label{eq:Ja}
\end{equation}
was used to obtain rightmost equality in \eqref{eq:advQG-a}.

\begin{remark}[On notation]
    In this paper, when $v, w \in C^\infty(\Omega)^n$ with $n \geq 1$, the product $vw$ is understood as $\sum_{i=1}^n v_i w_i$. For instance, $J(a, \psi \partial_a A)$ refers to $\sum_{i=1}^{\alpha+1} J(a_i, \psi \partial_{a_i} A)$.
\end{remark}

\begin{remark}\label{rem:geo}
    This remark pertains to the geometric formulation of advQG dynamics. The formulation adheres to \textup{\cite{Marsden-Morrison-84, Beron-Luesink-25}}, with additional details and a correction to \textup{\cite{Beron-Luesink-25}} provided here. In this formulation, the phase space $\mathfrak F_\Omega^{\alpha+2}$ is viewed as $\mathfrak{sdiff}(\Omega)^* \ltimes C^\infty(\Omega)^{\alpha+1}$ with $q \in \mathfrak{sdiff}(\Omega)^*$ and $a \in C^\infty(\Omega)^{\alpha+1}$, by identifying the dual of $C^\infty(\Omega)$ with itself using the $L^2$ inner product on $\Omega$. Here, $\mathfrak{sdiff}(\Omega) \ltimes C^\infty(\Omega)^{\alpha+1}$ represents the \textbf{Lie algebra of the Lie group} $\mathrm{SDiff}(\Omega) \ltimes C^\infty(\Omega)^{\alpha+1}$, which is obtained by extending the Lie group of symplectic \textup{(}i.e., area preserving\textup{)} diffeomorphisms on $\Omega$, $\mathrm{SDiff}(\Omega)$, via \textbf{semidirect product} with $C^\infty(\Omega)^{\alpha+1}$. \textup{(}For an in-depth survey of how broken symmetry leads to semidirect product Lie algebras, readers are referred to \textup{\cite{Holm-24}}.\textup{)} 
    
    As a set \textup{(}manifold\textup{)}, $\mathrm{SDiff}(\Omega) \ltimes C^\infty(\Omega)^{\alpha+1}$ is $\mathrm{SDiff}(\Omega) \times C^\infty(\Omega)^{\alpha+1}$, with multiplication given by the $(\alpha+2)$-tuple
    \begin{equation}
        (g_t,a)(\bar g_t,\bar a) = (g_t\circ \bar g_t, a + \bar a\circ g_t^{-1})
    \end{equation}
    for all $g_t,\bar g_t \in \mathrm{SDiff}(\Omega)$ and $a,\bar a \in C^\infty(\Omega)^{\alpha+1}$, where the representation of $\mathrm{SDiff}(\Omega)$ on $C^\infty(\Omega)^{\alpha+1}$ is given by right action, specifically pushforward. The inverse of $\mathrm{SDiff}(\Omega) \ltimes C^\infty(\Omega)^{\alpha+1}$ is given by $(g_t,a)^{-1} = (g_t^{-1},-ag_t)$, implying that the identity is $(g_0,0) =: e$.  Here, $g_t \in \mathrm{SDiff}(\Omega)$ is the $(0, t)$-flow map, which takes particle positions from their initial positions at time $t = 0$ to their current positions at time $t$. Consequently, the velocity is the Hamiltonian vector field $X_\psi = \nabla^\perp\psi\cdot\nabla$ computed from $g_t$ as $X_\psi = \dot{g}_t \circ g_t^{-1}$.  Since $X_\psi$ is fully specified by $\psi$, one regards $\psi$ as an element of $\mathfrak{sdiff}(\Omega) \cong T_{g_0}\mathrm{SDiff}(\Omega)$. To ensure the analytical validity of the geometric formulation in the sense of \textup{\cite{Ebin-Marsden-70}}, $\mathrm{SDiff}(\Omega)$ must be taken to be of Sobolev class $H^k$, $k > 2$, or of H\"older class $C^{k+\alpha}$ with $k\geq 1$ and $0<\alpha<1$. The Hilbert space nature of the Sobolev class diffeomorphisms is convenient, so we restrict to this setting.
    
    Let
    \begin{equation}
        c_t := (g_t,ta),
    \end{equation}
    satisfying $c_0 = (g_0,0) \equiv e$ and $\left.\frac{d}{dt}\right\vert_{t=0}c_t = (X_\psi, a)$. The Lie bracket in $\mathfrak{sdiff}(\Omega) \ltimes C^\infty(\Omega)^{\alpha+1}$ is the $(\alpha+2)$-tuple defined by
    \begin{subequations}
    \begin{align}
        \left.\frac{d}{dt}\frac{d}{ds}\right\vert_{t,s=0} c^{-1}_tc_sc_t &\doteq \big((J(\psi,\bar\psi),  J(\psi,\bar a_1) + J(a_1,\bar\psi), \dotsc, J(\psi,\bar a_{\alpha+1}) + J(a_{\alpha+1},\bar\psi)\big)\label{eq:lie-a}\\ &=: [(\psi,a),[(\bar\psi,\bar a)]\label{eq:lie-b}
    \end{align}
    \end{subequations}
    for all $\psi,\bar\psi \in \mathfrak{sdiff}(\Omega)$ and $a,\bar a \in C^\infty(\Omega)^{\alpha+1}$. The notation $\doteq$ is used instead of $=$ because the differentiation of the conjugation in \eqref{eq:lie-a} gives $X_{J(\psi,\bar\psi)}$ in the first entry of the tuple, which is determined by $J(\psi,\bar\psi)$. A Lie bracket is presented in \textup{\cite{Beron-Luesink-25}} that differs from \eqref{eq:lie-b}. Each $-$ sign in the expression given in \textup{(}71\textup{)} of \textup{\cite{Beron-Luesink-25}} should be a $+$ sign, which we take the opportunity to rectify. 
    
    The Lie bracket \eqref{eq:lie-b} carries in its dual the \textbf{Lie--Poisson bracket}
    \begin{equation}
        \{f,g\} = \left\langle u, \left[\frac{\delta f}{\delta u}, \frac{\delta g}{\delta u}\right]\right\rangle,
        \label{eq:LP}
    \end{equation}
    where the $L^2$ inner product on $\Omega$ provides a duality paring on $\mathfrak{sdiff}(\Omega) \ltimes C^\infty(\Omega)^{\alpha+1}$. The Lie--Poisson bracket \eqref{eq:LP} is equivalent to \eqref{eq:PB} when specialized to $u = (q,a)$. 
    
    We close by noting that this geometric formulation of the Lie--Poisson bracket is special because the elements of $\mathfrak{sdiff}(\Omega)$ are actually time-dependent smooth functions on $\Omega$, meaning they belong to $C^\infty(\Omega)$. Consequently, the dual of $\mathfrak{sdiff}(\Omega)$ is identified with itself using the $L^2$ inner product on $\Omega$.
\end{remark}

\subsection{Special types of advQG dynamics}\label{sec:ILQG}

The IL$^{(0,\alpha)}$QG \cite{Beron-21-POFb}is a special case of advQG dynamics, as noted above. Two relevant cases, with the interpretation of its variables presented in Table \ref{tab:special}, are discussed here for future reference.  These are the IL$^0$QG, also known as Ripa's model \cite{Dellar-03}, thermal QG \cite{Warneford-Dellar-13} or TQG \cite{Holm-etal-21}, and the IL$^{(0,1)}$QG, which is a stratified form of IL$^0$QG.

\begin{table}[t!]
    \centering
    \begin{tabular}{lcccc}
        \hline\hline
        $\alpha$ &  Model & $A(a)$ & $h - H$ & $\vartheta - g'$\\
        $0$ & IL$^0$QG & $\frac{1}{R^2}\psi_\sigma$ & $\frac{H}{f_0R^2}(\psi-\psi_\sigma)$ & $\frac{2g'}{f_0R^2}\psi_\sigma$\\
        $1$ & IL$^{(0,1)}$QG & $\frac{1}{R^2}\big(\psi_\sigma - \frac{2}{3}\psi_{\sigma^2}\big)$ & $\frac{H}{f_0R^2}(\psi-\psi_\sigma+\frac{2}{3}\psi_{\sigma^2})$ & $\frac{2g'}{f_0R^2}\psi_\sigma +  \sigma(\frac{1}{2}N^2H + \frac{4g'}{f_0R^2}\psi_{\sigma^2})$ \\
        \hline
    \end{tabular}
    \caption{Two relevant special cases of advQG dynamics, the IL$^0$QG and IL$^{(0,1)}$QG models.}
    \label{tab:special}
\end{table}

The IL$^0$QG, which corresponds to advQG dynamics with $\alpha = 0$, has $a_1 = \psi_\sigma$ and $A(a_1) = R^{-2}\psi_\sigma$. This advected quantity represents a rescaled buoyancy ($\vartheta$) deviation from a background value, $g'$. The buoyancy is defined as the negative ratio of the density difference between the active and inert layers to the reference density, used in the Boussinesq approximation, multiplied by gravity.  When density changes are dominated by temperature changes, $\vartheta$ can be interpreted as temperature, clarifying the nomenclature ``thermal QG'' in \cite{Warneford-Dellar-13}. The deviation of the layer thickness ($h$) from the thickness of the fluid at rest ($H$) is proportional to $\psi - \psi_\sigma$. Finally, by the thermal-wind balance, which dominates at low frequencies, it turns out that the (horizontal) velocity in the IL$^0$QG has an \emph{implicit} vertical shear proportional to $\nabla^\perp\psi_\sigma$. 

The IL$^{(0,1)}$QG, a special case of advQG dynamics with $\alpha = 1$, has, in addition to $a_1 = \psi_\sigma$, an another advected quantity, $a_2 = \psi_{\sigma^2}$.  In this model, $A(a_1,a_2) = R^{-2}\big(\psi_\sigma - \frac{2}{3}\psi_{\sigma^2}\big)$, where $R$ is slightly smaller, and hence insignificant, compared to its value in IL$^0$QG. The interpretation of $\psi_{\sigma^2}$ is that of a rescaled instantaneous buoyancy frequency deviation from the background frequency, $N$. The layer thickness deviation from the mean thickness, $h-H$, is proportional to the difference $\psi - R^2A(a_1,a_2)$. Finally, by the thermal-wind balance, the IL$^{(0,1)}$QG velocity includes, implicitly, vertical curvature, proportinal to $\psi_{\sigma^2}$.

The behavior of IL$^0$QG and IL$^{(0,1)}$QG in the problem of baroclinic instability was recently investigated in some depth in \cite{Beron-Olascoaga-25}, where additional details about the models can be found.

\subsection{Conservation laws}

As one might expect, the Hamiltonian \eqref{eq:H} serves as the generator of infinitesimal time shifts $t \mapsto t + \varepsilon$. This is evident because it implies $(q,a) \mapsto (q,a) + \varepsilon\partial_t(q,a)$, i.e., $\delta_{\mathcal H}(q,a) = \partial_t(q,a)$. Hence, specializing \eqref{eq:inf} to $f[q,a] = (q,a)$, we have
\begin{equation}
    \mathbf J\frac{\delta \mathcal H}{\delta (q,a)} = \partial_t(q,a),
\end{equation}
which are Hamilton's equations. According to Noether's theorem, $\mathcal H$ is a conserved quantity since
\begin{equation}
    \delta_{\mathcal H}\mathcal H = \varepsilon \int_\Omega\frac{\delta\mathcal H}{\delta q}\partial_t q + \frac{\delta\mathcal H}{\delta a}\partial_t a \,dxdy = \varepsilon\{\mathcal H,\mathcal H\} \equiv 0,
\end{equation}
by the skew-symmetry of the Poisson bracket. However, through direct evaluation, when the flow domain is bounded by a solid coast, one finds
\begin{equation}
    \dot{\mathcal{H}} = \left.\psi\right\vert_{\partial \Omega}\dot\gamma + \int_\Omega \psi J(\psi,q - 2 A)\,dxdy,
    \label{eq:dotH}
\end{equation}
where the fact that \eqref{eq:advQG-b} implies
\begin{equation}
    \partial_tA + J(\psi,A) = 0,
    \label{eq:dAdt}
\end{equation}
and consequently
\begin{equation}
    \partial_t(q-A) + J(q - 2A) = 0,
    \label{eq:qminusA}
\end{equation}
was used. The integral in \eqref{eq:dotH} vanishes due to the assumption of no-normal flow through $\partial \Omega$, as stated in Assumption \ref{ass:Omega}, along with $\nabla^\perp\cdot\nabla \equiv 0$. The conservation of $\mathcal{H}$ justifies Assumption \ref{ass:gamma}, regarding the preservation of circulation along $\partial \Omega$, on the left-hand side of \eqref{eq:gamma}. This is necessary for complete self-consistency when the deformation radius $R$ is finite. By contrast, when $R \uparrow \infty$, implying that the bottom of the active layer is horizontally rigid, one computes 
\begin{equation}
    \dot\gamma = \frac{d}{dt}\int_\Omega \partial_t(q-A-f)\,dxdy = - \int_\Omega J(\psi,q - 2A)\,dxdy  = 0.
\end{equation}
In other words, $\dot\gamma = 0$ is a consequence of advQG dynamics in that limiting case.  

Additional symmetry-related conservation laws are the momenta. Consider the case where the flow domain $\Omega$, lying on a $\beta$-plane, represents a zonal channel, which can be either infinite or periodic.  Then, the \emph{zonal momentum}, given by
\begin{equation}
    \mathcal M^x := -\int_\Omega yq\,dxdy,
\end{equation}
both generates infinitesimal zonal translations $x \mapsto x + \varepsilon$, since
\begin{equation}
    \mathbb J\frac{\delta\mathcal M^x}{\delta(q,a)} = 
    \begin{pmatrix}
        J(q,y) = \partial_xq\\
        J(a,y) = \partial_x a
    \end{pmatrix},
\end{equation}
and is preserved under dynamics, because
\begin{align}
    \delta_{\mathcal M^x}\mathcal H &= \varepsilon\int_\Omega\frac{\delta\mathcal H}{\delta q}\partial_xq + \frac{\delta\mathcal H}{\delta a}\partial_xa\,dxdy\nonumber\\ &= \varepsilon\int_\Omega \psi\partial_x(R^{-2}\psi -\nabla^2\psi - A - \beta y) + \psi\partial_aA\partial_xa\,dxdy\nonumber\\ &= \frac{\varepsilon}{2}\int_\Omega \partial_x(R^{-2}\psi^2 -\nabla^2\psi^2)\,dxdy
\end{align}
vanishes identically under a sufficient fall-off in $x$ or periodicity in that direction. Consider now the case in which $\Omega$ is axisymmetric and lies on the $f$-plane.  The conserved generator of infinitesimal rotations $\vartheta \mapsto \vartheta + \varepsilon$, is the \emph{angular momentum}, namely,
\begin{equation}
    \mathcal M^\vartheta := -\frac{1}{2}\int_\Omega (x^2+y^2)q\,dxdy = -\frac{1}{2}\int_\Omega r^3q\,drd\vartheta,
\end{equation}
where $r = \sqrt{x^2 + y^2}$ is the radial coordinate. In fact,
\begin{equation}
    \mathbb J\frac{\delta\mathcal M^\vartheta}{\delta(q,a)} = 
    \begin{pmatrix}
        J(q,r) = x\partial_yq - y\partial_xq = \partial_\vartheta q\\
        J(a,r) = x\partial_ya - y\partial_xa = \partial_\vartheta a
    \end{pmatrix}
\end{equation}
and further
\begin{align}
    \delta_{\mathcal M^\vartheta}\mathcal H &= \varepsilon\int_\Omega r\left(\frac{\delta\mathcal H}{\delta q}\partial_\vartheta q + \frac{\delta\mathcal H}{\delta a}\partial_\vartheta a\right)\,drd\vartheta\nonumber\\ &= \varepsilon\int_\Omega r\psi\partial_\vartheta(R^{-2}\psi -\nabla^2\psi - A) + r\psi\partial_aA\partial_\vartheta a\,drd\vartheta\nonumber\\ &= \frac{\varepsilon}{2}\int_\Omega \partial_\vartheta (R^{-2}r\psi^2 - r\nabla^2\psi^2)\,drd\vartheta
\end{align}
equals zero by periodicity in $\vartheta$, where $\nabla^2 = \partial_{rr} + r^{-1}\partial_r + r^{-2}\partial_{\vartheta\vartheta}$. 

The Casimirs of the Poisson bracket \eqref{eq:PB}, whose Poisson operator is given by \eqref{eq:J}, depend on $\alpha$ \cite{Beron-21-POFb}: if $\alpha = 0$
\begin{subequations}\label{eq:C}
\begin{equation}
    \mathcal C_{\Phi_1,\Phi_2}[q,a] = \int_\Omega q\Phi_1(a_1) + \Phi_2(a_1)\,dxdy\,\,\forall \Phi_1,\Phi_2,
    \label{eq:C-0}
\end{equation}
while if $\alpha > 0$
\begin{equation}
    \mathcal C_{c,\Phi_3}[q,a] = \int_\Omega cq + \Phi_3(a)\,dxdy\,\,\forall c = \mathrm{const}, \Phi_3.
    \label{eq:C-g0}
\end{equation}
\end{subequations}

In addition, the advQG dynamics equations \eqref{eq:advQG} preserve, when $\alpha > 0$,
\begin{equation}
    \mathcal W_{\Phi_4,\Phi_5}[q,a] := \int_\Omega q\Phi_4\big(A(a)\big) + \Phi_5\big(A(a)\big)\,dxdy\,\,\forall \Phi_4,\Phi_5.
    \label{eq:W}
\end{equation}
The proof of the conservation law $\dot{\mathcal W}_{\Phi_4,\Phi_5} = 0$ is a straightforward adaption of that given in \cite{Beron-Olascoaga-25} for the IL$^{(0,1)}$QG, which has into account \eqref{eq:dAdt}. The integral of motion \eqref{eq:W} is not a Casimir of the Poisson bracket \eqref{eq:PB}, but rather of this bracket with $a$ replaced by $A(a)$. Together with the Hamiltonian \eqref{eq:H}, it leads to a set of evolution equations comprising \eqref{eq:advQG}b and \eqref{eq:dAdt}. The resulting dynamics is such that the evolution of potential vorticity is not influenced by the details of the evolution of the individual advected quantities. Following \cite{Beron-Luesink-25}, we refer to \eqref{eq:W} as a weak Casimir. Together with the Casimirs \eqref{eq:C},  we can relate it to particle relabeling symmetry via Noether's theorem, as done in \cite{Beron-Luesink-25}.

\subsection{Production of Kelvin circulation}

Following \cite{Beron-Luesink-25}, the advQG dynamics equations \eqref{eq:advQG} follow as an Euler--Poincar\'e equation with advected quantities \cite{Holm-etal-02}, viz.,
\begin{equation}
    \partial_t\frac{\delta\mathcal L}{\delta\psi} + J\left(\psi,\frac{\delta\mathcal L}{\delta\psi}\right) = J\left(\frac{\delta\mathcal L}{\delta a},a\right)
    \label{eq:EP}
\end{equation}
with \emph{Lagrangian} defined by
\begin{equation}
    \mathcal L[\psi, a] := \frac{1}{2}\int_\Omega |\nabla\psi|^2 + R^{-2}\psi^2 - 2\psi\big(A(a) + \beta y\big)\,dxdy,
    \label{eq:L}
\end{equation}
and the advection equation for $a$, given by \eqref{eq:advQG-b}. We refer to \eqref{eq:EP} and \eqref{eq:advQG-b}, with Lagrangian \eqref{eq:L}, as the \emph{Euler--Poincar\'e equations for advQG dynamics}. Note that
\begin{equation}
    \frac{\delta\mathcal L}{\delta\psi} = - q,\quad
    \frac{\delta\mathcal L}{\delta a} = -\psi\partial_aA.
\end{equation}
Plugging these expressions into \eqref{eq:EP} leads to \eqref{eq:advQG-a}, as claimed.  Let $\Omega_t \subset \Omega$ be a fluid region at time $t$.  Integrating \eqref{eq:EP} over $\Omega_t$, one obtains, after changing variables,
\begin{equation}
    \frac{d}{dt}\int_{\Omega_t}\frac{\delta \mathcal L}{\delta\psi}\,dxdy = \int_{\Omega_t}J\left(\frac{\delta \mathcal L}{\delta a},a\right)\,dxdy
    \Longleftrightarrow
    \frac{d}{dt}\int_{\Omega_t}q\,dxdy = \int_{\Omega_t}J\big(\psi, A(a)\big)\,dxdy,
    \label{eq:K}
\end{equation}
where \eqref{eq:Ja} was used to obtain the right-hand side equality. Following \cite{Beron-Luesink-25}, let $\mathbf f(\mathbf x)$ be the Coriolis parameter's covector potential coefficient.  That is, $\nabla^\perp\cdot\mathbf f = f$. Next, define $\mathbf u(\mathbf x,t) := \nabla^\perp\psi(\mathbf x,t)$, i.e., the velocity vector coefficient, and consider $\mathbf q(\mathbf x,t)$, defined by
\begin{equation}
    \mathbf q := \mathbf u + \mathbf f - \nabla^\perp\nabla^{-2}(R^{-2}\psi - A), 
\end{equation}
where $\nabla^{-2}$ (sometimes also denoted $\Delta^{-1}$) must be understood in terms of the Green function associated with the elliptic problem \eqref{eq:q}. Note that $\nabla^\perp\cdot\mathbf q = q$, so $\mathbf q$ can be interpreted as the coefficient of the potential vorticity's covector potential. Finally, define 
\begin{equation}
    \mathcal K_{\Omega_t} := \int_{\Omega_t}q\,dxdy = \oint_{\partial \Omega_t} \mathbf q\cdot d\mathbf x,
\end{equation}
with the equality holding by Stokes theorem.  This generalizes the notion of \emph{Kelvin circulation} in 2D barotropic dynamics, that is, when $R^{-2} = 0$ and there are no advected quantities, implying $\mathbf q = \mathbf u + \mathbf f$, i.e., the coefficient of the absolute velocity covector, to the case of advQG dynamics. Then, applying Stokes theorem on the left-hand side of the equality on the right of \eqref{eq:Ja}, we have
\begin{equation}
    \dot{\mathcal K}_{\Omega_t} = \int_{\Omega_t}J\big(\psi,A\big)\,dxdy.
    \label{eq:dKdt}
\end{equation}
We refer to \eqref{eq:dKdt}, or more generally to \eqref{eq:K}, as the \emph{Kelvin circulation theorem for advQG dynamics}.  If advQG dynamics are specialized to IL$^{(0,\alpha)}$QG dynamics, the \emph{production of Kelvin circulation can be attributed to the misalignment between the gradients of dynamic topography and temperature}. 

\begin{remark}\label{rem:vol}
    If $\Omega_t$ is taken to be a fixed domain $\Omega$ bounded by a solid boundary, the right-hand side of \eqref{eq:dKdt} vanishes due to the requirement of zero normal to $\partial \Omega$, as imposed in Assumption \ref{ass:Omega}. Consequently, $\dot{\mathcal K}_{\Omega} = \dot\gamma - R^{-2}\frac{d}{dt}\int_\Omega \psi\,dxdy = 0$. Since, $\dot\gamma = 0$ by Assumption \ref{ass:gamma}, it follows that $\int_\Omega \psi\,dxdy = \mathrm{const}$, which admits the interpretation of volume preservation \textup{\cite{Beron-Luesink-25, Beron-Olascoaga-25}}.
\end{remark}

Two important observations must be considered regarding the ramifications of the absence of Kelvin circulation conservation by advQG dynamics. 
\begin{itemize}
    \item In recent years, the above result was used to offer an explanation within the realm of 2D QG modeling for remote sensing images of the ocean surface \cite{Holm-etal-21, Beron-21-POFa, Beron-21-POFb, Beron-24-POFa}. These images frequently reveal submesoscale roll-up vortices forming along thermal fronts. These roll-up vortices have been described as 2D ageostrophic thermal instabilities \cite{Gouzien-etal-17} or as fully 3D nonhydrostatic frontogenesis \cite{McWilliams-21}. This conveys added value to advQG modeling, extending beyond theoretical significance to practical applications.  Moreover, it potentially leads to more accurate motion types resulting from discovering a Lagrangian (or Hamiltonian) from data, opening an intriguing avenue for data-driven modeling \cite{Brunton-Kutz-24} tailored to preserve geometric structure.
    
    \item On the other hand, the generation of Kelvin circulation in advQG dynamics is linked to the lack of conservation of potential vorticity along fluid particle trajectories, which contrasts with the 3D QG theory for continuously stratified fluid \cite{Pedlosky-87}. In the 1990s, this aspect of advQG dynamics was not considered a beneficial feature but rather a deficiency. It was linked, in the IL$^0$QG model, to the existence of a ``forced compensating mode'' where layer thickness and buoyancy rearrange without modifying the energy, allowing for the possibility of unlimited growth of perturbations to a state of no motion \cite{Ripa-JGR-96, Ripa-JFM-95, Ripa-DAO-99}.  A similar mode with analogous consequences was identified in the IL$^{(0,1)}$QG model \cite{Beron-Olascoaga-25}.
\end{itemize}

\begin{remark}\label{rem:EP}
    This remark is an adaptation of \textup{\cite{Beron-Luesink-25}}. Geometrically, $\psi$ is viewed as an element of $\mathfrak{sdiff}(\Omega)$, and Euler--Poincar\'e system, formed by the Euler--Poincar\'e equation \eqref{eq:EP} with advected quantities \eqref{eq:advQG-b}, defined on $\mathfrak{sdiff}(\Omega) \times C^\infty(\Omega)^{\alpha+1}$. This follows from the constrained Hamilton principle
    \begin{equation}
        \delta\int_{t_0}^{t_1} \mathcal L[\psi,a]\,dt = 0 : \delta\psi = \partial_t\eta + J(\psi,\eta),\,\,\delta a = J(\eta,a),
    \end{equation}
    where $\eta(\mathbf x,t)$ is arbitrary. The solution leads to 
    \begin{equation}
        J^\eta := \left\langle \frac{\delta \mathcal L}{\delta \psi},\eta\right\rangle = \mathrm{const}
    \end{equation}
    along the motion produced by the Euler--Poincar\'e system \eqref{eq:EP} and \eqref{eq:advQG-b}. By direct computation,
    \begin{equation}
        \dot J^\eta = \int_\Omega \frac{\delta \mathcal L}{\delta q}{\delta q} +  \frac{\delta \mathcal L}{\delta a}\delta a\, dxdy \equiv \delta \mathcal L.
    \end{equation}
    Therefore, $J^\eta$ is conserved iff the Lagrangian $\mathcal L$ is symmetric.  This \textbf{Noether's theorem on the Lagrangian side}.  We can refer to $J^\eta$ as a \textbf{Noether quantity}. 
    
    Let $\boldsymbol\ell$ denote a 2-tuple array of fluid particle \textup{(}i.e., Lagrangian\textup{)} labels, which is viewed as the coefficient of a vector in the tangent bundle to the configuration space, $\Omega$, denoted $T\Omega$.  Let $\mathbf p$ be the canonical momentum conjugate to $\boldsymbol\ell$. Then $(\boldsymbol\ell,\mathbf p)$ belongs to the contangent bundle to $\Omega$, $T^*\Omega$. The duality of Euler--Poincar\'e formulation, \eqref{eq:EP} and \eqref{eq:advQG-b}, and the Hamiltonian formulation, with Lie--Poisson bracket \eqref{eq:LP} or, equivalently, \eqref{eq:PB}, is achieved via a partial Legendre transform $(\psi,a) \mapsto (q,a)$ constructed using the \textbf{momentum map} $m : T^*\Omega \to \mathfrak{sdiff}(\Omega)^*$, given by
    \begin{equation}
        m(\boldsymbol\ell,\mathbf p) \equiv J(\mathbf p,\boldsymbol\ell) = \frac{\delta \mathcal L}{\delta \psi} = -q \Longleftrightarrow J^\eta = \langle J(\mathbf p,\boldsymbol\ell),\eta\rangle,
        \label{eq:m}
    \end{equation}
    that is,
    \begin{equation}
        \mathcal H[q,a] = \langle J(\mathbf p,\boldsymbol\ell),\psi\rangle - \mathcal L[\psi,a],
    \end{equation}
    which is furthermore such that
    \begin{equation}
        \frac{d}{dt}\int_{\Omega_t} J(\mathbf p,\boldsymbol\ell)\,dxdy = \int_{\Omega_t} J\Big(\frac{\delta \mathcal L}{\delta a},a\Big)\,dxdy,
    \end{equation}
    which is the left-hand side equality in \eqref{eq:K}. 
    
    To understand \eqref{eq:m}, it is useful to invoke the Clebsch variational principle with two Lagrange multipliers, one enforcing advection of $a$ and the another one, $\mathbf p$, enforcing the action of $\mathrm {SDiff}(\Omega)$ on $T\Omega$, which results in advection of $\boldsymbol\ell$.  Extended details are given in \textup{\cite{Beron-Luesink-25}}, following \textup{\cite{Holm-24}}. The above exposition explicitly reveals that the \textup{(}minus\textup{)} Jacobian is the momentum map, a point that was not emphasized in \textup{\cite{Beron-Luesink-25}}.
    
    Finally, since $J^\eta$ represents a Noether quantity, it is appropriate to regard the Kelvin circulation theorem as the \textbf{Kelvin--Noether circulation theorem}.  In \textup{\cite{Beron-Luesink-25}} we make an explicit connection of $J^\eta = \mathrm{const}$ with Casimir conservation as a consequence the invariance of Eulerian variables under fluid particle relabeling.
\end{remark}

\section{Irreversible advQG dynamics}\label{sec:il-irreversible}

\subsection{The nonsymmetric case}

We begin constructing irreversible advQG dynamics, hereafter referred to as \emph{advQGi dynamics}, for the case without symmetry. Specifically, we assume that $\Omega$ spans the infinite $f$-plane or represents a possibly multiply connected domain thereof, enforcing Assumption \ref{ass:Omega} in every case. Let 
\begin{equation}
    J(\mathbf x,t)\in C^\infty(\Omega)^{n(\alpha)},\quad 
    n(\alpha) :=  
    \begin{cases}
        \frac{1}{2}\alpha + 1 & \text{if $\alpha$ : even,}\\
        \frac{1}{2}(\alpha + 1) & \text{if $\alpha$ : odd.}
    \end{cases}
\end{equation}
Then, define
\begin{equation}
    \hat{\mathbb J} := 
    \begin{bmatrix}
    0 & \operatorname{diag}J\\
    -\operatorname{diag}J & 0
    \end{bmatrix},
\end{equation}
and build with it the skew-adjoint operator depending on the parity of $\alpha$:
\begin{equation}
    \mathbb J := 
    \begin{cases}
        \hat{\mathbb J} & \text{if $\alpha$ is even,}\\
        \hat{\mathbb J} \oplus (0) & \text{if $\alpha$ is odd.}
    \end{cases}
    \label{eq:J2}
\end{equation}
Plugging \eqref{eq:J2} in \eqref{eq:bra2} produce, for any $f,g[u] \in C^\infty(\mathfrak F_\Omega^{\alpha+2})$ with $u = (q,a)$, the following skew-symmetric bracket depending on $\alpha$'s parity:
\begin{equation}
    \{\hspace{-.125cm}\{f,g\}\hspace{-.125cm}\} = \sum_{j=1}^{n(\alpha)}\int_\Omega J_j\left(\frac{\delta f}{\delta u_j}\frac{\delta g}{\delta u_{j + n(\alpha)}} - \frac{\delta f}{\delta u_{j + n(\alpha)}}\frac{\delta g}{\delta u_j}\right)\,dxdy.
    \label{eq:bra2-advQG}
\end{equation}
It is easy to see that the above bracket also satisfies the Jacobi identity. Define
\begin{equation}
    \lambda := \{\hspace{-.125cm}\{\mathcal C,\mathcal H\}\hspace{-.125cm}\}
\end{equation}
where $\mathcal H$, the Hamiltonian for reversible advQG dynamics, is given in \eqref{eq:H} and $\mathcal C$ is a Casimir of the corresponding Poisson bracket, as given by \eqref{eq:PB}. Depending on $\alpha$, this can be chosen from \eqref{eq:C}. Using the skew-symmetric bracket specified in \eqref{eq:bra2-advQG}, the metriplectic induced two-bracket, as defined in \eqref{eq:two}, evaluates to the following when considering $f = \mathcal{C}$, $g = u_i$, $i = 1,2,\dotsc,\alpha + 2$, and $h = \mathcal{H}$:
\begin{equation}
    (\mathcal C,u_i)_{\mathcal H} = 
    \begin{cases}
        \lambda J_i\frac{\delta\mathcal H}{\delta u_{i+n(\alpha)}} & \text{if } i = 1,2,\dotsc,n(\alpha),\\
        -\lambda J_{i-n(\alpha)}\frac{\delta\mathcal H}{\delta u_{i-n(\alpha)}} & \text{if } i = n(\alpha)+1,n(\alpha)+2,\dotsc,2n(\alpha),\\
        0 & \text{if } i = 2n(\alpha) + 1,
    \end{cases}
    \label{eq:two-advQG}
\end{equation}
where the third possibility is included when $\alpha$ is odd. Lastly, using Proposition \ref{pro:nonsym}, the following set of equations controlling advQGi dynamics with respect to internal energy $\mathcal U = \mathcal H$ and entropy $\mathcal S = \mathcal C$, result:
\begin{subequations}
\begin{align}
   \partial_t q &= \{q,\mathcal H\} + (\mathcal C,q)_{\mathcal H} = - J\big(\psi, q - A\big) + \lambda\psi J_1 \partial_{a_{n(\alpha)}}A,\label{eq:iadvQG-q}\\
   \partial_t a_i &= \{a_i,\mathcal H\} + (\mathcal C,a_i)_{\mathcal H} = - J(\psi,a_i) + (\mathcal C,a_i)_{\mathcal H},
   \label{eq:iadvQG-a}
\end{align} 
where
\begin{equation}
    (\mathcal C,a_i)_{\mathcal H} = 
    \begin{cases}
        \lambda \psi J_{i+1}\partial_{a_{i+n(\alpha)}}A & \text{if } i = 1,2,\dotsc,n(\alpha)-1,\\
        \lambda \psi J_{i-n(\alpha)+1} & \text{if } i = n(\alpha),\\
        -\lambda \psi J_{i-n(\alpha)+1}\partial_{a_{i-n(\alpha)}}A & \text{if } i = n(\alpha)+1,n(\alpha)+2,\dotsc,2n(\alpha)-1,\\
        0 & \text{if } i = 2n(\alpha),    
    \end{cases}
    \label{eq:advQGi-gen}
\end{equation}
\label{eq:advQGi}
\end{subequations}
where the forth possibility is included when $\alpha$ is odd. 

For future reference and to illustrate the evaluation of the general expression \eqref{eq:advQGi-gen}, we consider two relevant specific cases, $\alpha = 0$ and 1.  Consider first the case $\alpha = 0$, which corresponds to $n(\alpha) = \frac{1}{2}\alpha + 1 = 1$.  The phase space variables in this case are $(q,a_1)$. The equations for advQGi dynamics are:
\begin{subequations}\label{eq:advQGi-0}
\begin{align}
    \partial_t q + J\big(\psi, q - A(a_1)\big) &= \lambda\psi J_1 A'(a_1),\label{eq:advQGi-0-q}\\
    \partial_t a_1 + J(\psi,a_1) &= \lambda\psi J_1,
\end{align}
where
\begin{equation}
    \lambda =  \{\hspace{-.125cm}\{\mathcal C_{\Phi_1,\Phi_2},\mathcal H\}\hspace{-.125cm}\} = \int_\Omega \psi J_1\big(\Phi_1'(a_1)A'(a_1) + q\Phi_1'(a_1) + \Phi_2'(a_1)\big)\,dxdy,
    \label{eq:lambda-0}
\end{equation}
\end{subequations}
where $\mathcal C_{\Phi_1,\Phi_2}$ is the most general Casimir of the Poisson bracket for the corresponding reversible advQG dynamics, given in \eqref{eq:C-0}. Consider now the case the case $\alpha = 1$. This also has $n(\alpha) =  \frac{1}{2}(\alpha + 1) = 1$, but the phase space variables are $(q,a_1,a_2)$. The equations for advQGi dynamics are:
\begin{subequations}\label{eq:advQGi-1}
\begin{align}
    \partial_t q + J\big(\psi, q - A(a_1,a_2)\big) &= \lambda\psi J_1 \partial_{a_1}A,\label{eq:advQGi-1-q}\\
    \partial_t a_1 + J(\psi,a_1) &= \lambda\psi J_1,\\
    \partial_t a_2 + J(\psi,a_2) &= 0,
\end{align}
where
\begin{equation}
    \lambda =  \{\hspace{-.125cm}\{\mathcal C_{c,\Phi_3},\mathcal H\}\hspace{-.125cm}\} = \int_\Omega \psi J_1\left(c\partial_{a_1}A + \partial_{a_1}\Phi_3\right)\,dxdy,
\end{equation}
\end{subequations}
where $\mathcal C_{c,\Phi_3}$ is the most general Casimir of the Poisson bracket for the corresponding reversible advQG dynamics, given in \eqref{eq:C-g0}.

\subsubsection{Direct verification of irreversibility}

The metriplectic formulation developed in Section \ref{sec:formulation}, used to derive the equations of advQGi dynamics \eqref{eq:advQGi}, has been constructed to ensure irreversibility, namely, $\dot{\mathcal{U}} = \dot{\mathcal{H}} = 0$ and $\dot{\mathcal{S}} = \dot{\mathcal{C}} \ge 0$. It is reassuring that this can be directly verified, which we proceed to show. We begin with the first law of thermodynamics. Note that
\begin{align}
    \frac{DA}{Dt}  
    &= 
    \partial_aA\frac{Da}{Dt}  \nonumber\\
    &=
    \lambda\psi\left(\sum_{i=1}^{n(\alpha)-1}J_{i+1}\partial_{a_i}A\partial_{a_{i+n(\alpha)}}A + J_1\partial_{n(\alpha)}A - \sum_{i=n(\alpha)+1}^{2n(\alpha)-1}J_{i-n(\alpha)}\partial_{a_i}A\partial_{a_{i-n(\alpha)}}A \right)\nonumber\\
    &= 
    \lambda \psi J_1\partial_{a_{n(\alpha)}}A,
\end{align}
where
\begin{equation}
    \frac{D}{Dt} := \partial_t + J(\psi,\cdot\hspace{.05cm})
\end{equation}
denotes the total derivative.  The above readily implies \eqref{eq:qminusA}, which simplifies the evaluation of $\dot{\mathcal{H}}$ to that in the reversible advQG dynamics case. This results in $\dot{\mathcal{H}} = 0$, implying the preservation of $\mathcal{U} = \mathcal{H}$. $\square$

To verify the second law of thermodynamics, the case $\alpha = 0$ must be analyzed separately from the cases where $\alpha > 0$ because the reversible advQG dynamics have different Casimir structures. We begin with $\alpha = 0$ and time differentiate \eqref{eq:C-0}:
\begin{align}
    \dot{\mathcal C}_{\Phi_1,\Phi_2} 
    &= 
    \int_\Omega \partial_tq \Phi_1(a_1) + q\partial_t\Phi_1(a_1) +\partial_t\Phi_2(a_1)\,dxdy\nonumber\\
    &=
    \int_\Omega (\mathcal C_{\Phi_1,\Phi_2},q)_{\mathcal H}\Phi_1(a_1) + \big(q\Phi_1'(a_1) + \Phi_2'(a_1)\big) (\mathcal C_{\Phi_1,\Phi_2},a_1)_{\mathcal H}\,dxdy\nonumber\\
    &=
    \lambda\int_\Omega \{\hspace{-.125cm}\{q,\mathcal H\}\hspace{-.125cm}\}\Phi_1(a_1) + \big(q\Phi_1'(a_1) + \Phi_2'(a_1)\big) \{\hspace{-.125cm}\{a_1,\mathcal H\}\hspace{-.125cm}\}\,dxdy\nonumber\\
    &=
    \lambda \int_\Omega\{\hspace{-.125cm}\{q\Phi_1(a_1) +\Phi_2(a_1),\mathcal H\}\hspace{-.125cm}\}\,dxdy
    = 
    \lambda\{\hspace{-.125cm}\{\mathcal C_{\Phi_1,\Phi_2},\mathcal H\}\hspace{-.125cm}\}
    =
    \lambda^2,
\end{align}
where the derivation property of $\{\hspace{-.125cm}\{\,,\hspace{.05cm}\}\hspace{-.125cm}\}$ was used. This implies that $\mathcal S = \mathcal C_{\Phi_1,\Phi_2}$ is either produced or preserved, but not consumed over time. $\square$

Now, we turn to $\alpha > 0$ and take the time derivative of \eqref{eq:C-g0}:
\begin{align}
    \dot{\mathcal C}_{c,\Phi_3} 
    &= 
    \int_\Omega c\partial_tq + \partial_t\Phi_3(a)\,dxdy\nonumber\\
    &=
    \int_\Omega (\mathcal C_{c,\Phi_3},cq)_{\mathcal H} + (\mathcal C_{c,\Phi_3},\Phi_3(a))_{\mathcal H}\,dxdy\nonumber\\
    &=
    \lambda \int_\Omega\{\hspace{-.125cm}\{cq +\Phi_3(a),\mathcal H\}\hspace{-.125cm}\}\,dxdy
    = 
    \lambda\{\hspace{-.125cm}\{\mathcal C_{c,\Phi_3},\mathcal H\}\hspace{-.125cm}\}
    =
    \lambda^2,
\end{align}
where, as for $\alpha = 0$, the derivation property of $\{\hspace{-.125cm}\{\,,\hspace{.05cm}\}\hspace{-.125cm}\}$ was conveniently utilized.  Therefore, over time, $(\mathcal{S} = \mathcal{C}_{c,\Phi_3}$ is generated or conserved, but it cannot decrease. $\square$ 

\subsubsection{Kelvin circulation theorem for advQGi dynamics}

Note that \eqref{eq:advQGi} are obtained as
\begin{subequations}\label{eq:iEP}
\begin{align}
    \partial_t\frac{\delta \mathcal L}{\delta \psi} + J\left(\psi,\frac{\delta \mathcal L}{\delta \psi}\right) &= J\left(\frac{\delta \mathcal L}{\delta a},a\right) + \left(\frac{\delta \mathcal L}{\delta \psi},\mathcal C\right)_{\big\langle\frac{\delta \mathcal L}{\delta \psi}, \psi\big\rangle - \mathcal L}\label{eq:iEP-q}\\
    \partial_ta + J(\psi,a) &= (\mathcal C,a)_{\big\langle\frac{\delta \mathcal L}{\delta \psi}, \psi\big\rangle - \mathcal L}.\label{eq:iEP-a}
\end{align}
\end{subequations}
Here, $\mathcal L$ is the Lagrangian for reversible advQG dynamics, given by \eqref{eq:L}, and
\begin{equation}
    \left\langle\frac{\delta \mathcal L}{\delta \psi}, \psi\right\rangle - \mathcal L = \mathcal H
    \label{eq:LT}
\end{equation} 
is the corresponding Hamiltonian, given in \eqref{eq:H}, as it follows by integrating by parts. It is important to realize that, unlike the Euler--Poincar\'e equations for reversible advQG dynamics, given by \eqref{eq:EP} and \eqref{eq:advQG-b}, the set of equations \eqref{eq:iEP} do not follow from Hamilton's principle, as discussed in Remark \ref{rem:EP}, where \eqref{eq:LT} is interpreted as partial Legendre transform $(\psi,a) \mapsto (q,a)$.

However, a Kelvin circulation theorem, analogous to \eqref{eq:dKdt} for reversible advQG dynamics, can be derived by integrating \eqref{eq:iEP-q} over $\Omega_t$.  The result is
\begin{equation}
    \dot{\mathcal K}_{\Omega_t} = \int_{\Omega_t} J\big(\psi,A(a)\big) + \lambda\psi J_1 \partial_{a_{n(\alpha)}}A\, dxdy
    \label{eq:dKdt-iadvQG}
\end{equation}
where $\mathcal K_{\Omega_t}$ is defines as in \eqref{eq:K}, which follows after changing variables and using the  metriplectic induced two-bracket \eqref{eq:two-advQG} evaluated for $u_1 = q$. This represents the \emph{Kelvin circulation circulation for advQGi dynamics}.  Irreversibility can either increase or decrease the production of circulation along material loops.

\begin{remark}
   It is important to note that the concept of irreversibility, as defined in Definition \ref{def:reversible}, which pertains to the production of entropy while conserving internal energy, does not rule out the existence of a smoothly invertible map, i.e., a diffeomorphism, that connects initial fluid particle positions with their positions at a later time. This allows for the application of a change of variables to obtain the result \eqref{eq:dKdt-iadvQG}.
\end{remark}

\subsubsection{Specific forms of advQGi dynamics}

While irreversibility is ensured regardless of the specific type of advQGi dynamics, there is flexibility in the choice. This freedom arises from the ability to select the functions \eqref{eq:J}. Of all such functions, the choice of $J_1$ will have the most evident impact. 

\paragraph{Volume preserving advQGi dynamics.}

Consider the Kelvin circulation theorem for advQGi \eqref{eq:dKdt-iadvQG} with $\Omega_t$ taken to be a fixed domain $\Omega$ bounded by a solid boundary. Assuming simple connectivity for simplicity, we compute
\begin{equation}
    - R^{-2}\frac{d}{dt}\int_\Omega \psi\,dxdy = \int_\Omega \lambda\psi J_1 \partial_{a_{n(\alpha)}}A\,dxdy
\end{equation}
by Assumptions \ref{ass:Omega} and \ref{ass:gamma}.  As noted in Remark \ref{rem:vol}, the vanishing time derivative on the left-hand side of the above relationship indicates volume preservation. For advQGi, this can be enforced by setting $J_1 = 0$. Clearly, irreversibility is achieved under these conditions when $\alpha > 1$.

\paragraph{Quasi-conserving advQGi dynamics.}

Consider advQGi dynamics with $\alpha = 0$, with motion equations given by \eqref{eq:advQGi-0}. Assume irreversibility with respect to the following specific choice of entropy:
\begin{equation}
    \mathcal S = \mathcal C_{0,A} = \int_\Omega A(a_1)\,dxdy.
    \label{eq:S0}
\end{equation}
This implies
\begin{equation}
    \lambda = \{\hspace{-.125cm}\{\mathcal C_{0,A},\mathcal H\}\hspace{-.125cm}\} = \int_\Omega J_1 \psi A'(a_1)\,dxdy.
\end{equation}
Let $\varphi(\mathbf x,t) \in C^\infty(\Omega)$ such that
\begin{equation}
    \int_\Omega \varphi\,dxdy = 1.
\end{equation}
Then, consider the choice
\begin{equation}
    J_1 = \frac{J\big(A(a_1),\psi\big) + \varphi}{\psi A'(a_1)},
    \label{eq:J1}
\end{equation}
which implies
\begin{equation}
    \lambda = \int_\Omega J(A(a_1),\psi) + \varphi\,dxdy = \int_\Omega \varphi\,dxdy = 1.
    \label{eq:lambda}
\end{equation}
Plugging \eqref{eq:lambda} and \eqref{eq:J1} into \eqref{eq:advQGi-0}, it follows that 
\begin{subequations}
    \begin{align}
         \partial_tq + J(\psi,q) &= \varphi,\\
         \partial_tAa_1 +  J\big(\psi,a_1\big) &= \varphi/A'(a_1).
    \end{align} 
\end{subequations}
It is easy to verify irreversibility directly: $\dot{\mathcal H} = 0$ and $\dot{\mathcal C}_{0,A} = \int_\Omega \varphi\, dxdy = 1 > 0$.

Consider now advQGi dynamics with $\alpha = 1$, with equations of motion given by \eqref{eq:advQGi-1}. Assume that the system is irreversibility with respect to the specified choice of entropy as given below:
\begin{equation}
    \mathcal S = \mathcal C_{0,A} = \int_\Omega A(a_1,a_2)\,dxdy,
    \label{eq:S1}
\end{equation}
which conduces to
\begin{equation}
    \lambda = \{\hspace{-.125cm}\{\mathcal C_{0,A},\mathcal H\}\hspace{-.125cm}\} = \int_\Omega J_1 \psi\partial_{a_1}A\,dxdy.
\end{equation}
As above, it is evident that the choice
\begin{equation}
    J_1 = \frac{J\big(A(a_1,a_2),\psi\big) + \varphi}{\psi \partial_{a_1}A}
    \label{eq:J1-1}
\end{equation}
reduces \eqref{eq:advQGi-1} to
\begin{subequations}
    \begin{align}
         \partial_tq + J(\psi,q) &= \varphi,\\
         \partial_tAa_1 +  J\big(\psi,a_1\big) &= \varphi/\partial_{a_1}A,\\
         \partial_tAa_2 +  J\big(\psi,a_2\big) &= 0,
    \end{align} 
\end{subequations}
which imply $\dot{\mathcal H} = 0$ and $\dot{\mathcal C}_{0,A} = \int_\Omega \varphi\, dxdy = 1 > 0$, as expected.

Let us set $\varphi$ to a constant given by
\begin{equation}
    \varphi = \frac{1}{\int_\Omega dxdy},
\end{equation}
which can be very small if the flow domain is sufficiently large. When specialized to IL$^0$QG and IL$^{(0,1)}$QG dynamics, the results above imply irreversible thermal QG dynamics, both standard and stratified, with approximately conserved potential vorticity and buoyancy (temperature). This can address issues identified with thermal QG dynamics, which are associated with the lack of conservation of potential vorticity, by reinterpreting the dynamics as irreversible. Interestingly, in this specialization, the entropies given by \eqref{eq:S0} and \eqref{eq:S1} can be interpreted as \emph{actual} entropies, being integrals of buoyancy (temperature).

A final comment pertains to the effect of the specific type of irreversibility induced by the choices \eqref{eq:J1} for $\alpha=0$ and \eqref{eq:J1-1} for $\alpha = 1$. In either case, generation of Kelvin circulation along transported fluid loops is produced at a rate given by
\begin{align}
    \dot{\mathcal K}_{\Omega_t} = \int_{\Omega_t}\varphi\,dxdy,
\end{align}
which is independent of misalignments between the gradients of dynamic topography and buoyancy (temperature), provided that advQG dynamics is specialized as IL$^{(0,\alpha)}$QG dynamics. 

\paragraph{Forced--dissipative advQGi dynamics.}

Let $F(\mathbf x,t) \in C^\infty(\Omega)$ be a forcing term and $D(\mathbf x,t) \in C^\infty(\Omega)$ represent dissipation. For instance, these can take the specific forms:
\begin{equation}
    F = \nabla^\perp\cdot\boldsymbol\tau,\quad 
    D = - r\nabla^2\psi - \nu(p)\nabla^{2p}q.
    \label{eq:FplusD}
\end{equation}
Here, $\boldsymbol \tau(\mathbf x,t)$ represents the wind stress applied at the top of the active fluid layer. The coefficient $r > 0$ can be interpreted as the drag at the interface with the inactive fluid layer, and $\nu(p) > 0$ is a hyperviscosity for $p \ge 1$. The drag term is most meaningful when the fluid lies atop a rigid bottom, possibly with a free surface \cite{Smith-Vallis-02}.  This configuration can be represented by advQG dynamics by appropriately choosing the parameter $R$ in \eqref{eq:q} (or more specifically setting $g'$ to $g$, gravity, in its specialization to IL$^{(0,\alpha)}$QG dynamics).

Setting
\begin{equation}
    J_1 = \frac{F+D}{\psi\partial_{a_{n(\alpha)}}A},
\end{equation}
the evolution of potential vorticity is subjected to forcing and dissipation, in a thermodynamically consistent manner, according to
\begin{equation}
    \partial_t q +J\big(\psi, q - A(a)\big) = \lambda (F + D),
\end{equation}
which does not depend on the particular choice \eqref{eq:FplusD}.

A final remark is that $\lambda$ can be controlled in the particular, but relevant, advQGi dynamics with $\alpha = 0$ and $1$. With respect to the specific choices of entropy as given by \eqref{eq:S0} and \eqref{eq:S1}, when $\Omega$ is bounded, allowing for the possibility of multiple connections, the choices
\begin{equation}
    J_1 = \frac{1}{\psi A'(a)}\frac{F+D}{\sqrt{\big|\int_\Omega F+D\,dxdy\big|}},\quad J_1 = \frac{1}{\psi \partial_{a_1}A}\frac{F+D}{\sqrt{\big|\int_\Omega F+D\,dxdy\big|}},
\end{equation}
when $\alpha = 0$ and $1$, respectively, both result in $\lambda = \operatorname{sign}\int_\Omega F+D\,dxdy$. This adds further realism to the forcing and dissipation term $F+D$, particularly when $F$ and $D$ are taken to be of the form indicated in \eqref{eq:FplusD}.

\subsection{The symmetric case}

We finalize with the construction of advQGi dynamics by considering that the flow domain $\Omega$ either represents a zonal channel, periodic or infinitely long, on the $\beta$-plane, or has axial symmetry, lying on the $f$-plane, enforcing Assumption \ref{ass:Omega} in each case.

Using Proposition \ref{pro:sym}, advQGi dynamics with respect to $\mathcal U = \mathcal H_{\dot\chi}$ and $\mathcal S = \mathcal C$ are produced by \eqref{eq:advQGi} with the second possibility in \eqref{eq:advQGi-gen}, valid for $i=n(\alpha)$, replaced
\begin{equation}
    (\mathcal C,a_{n(\alpha)})_{\mathcal H_{\dot\chi}} = 
    \begin{cases}
        \lambda_{\dot\chi} J_1 (\psi - \dot\chi y) & \text{if $\Omega$ is zonally symmetric,}\\
        \lambda_{\dot\chi} J_1 (\psi - \tfrac{1}{2}\dot\chi r^2) & \text{if $\Omega$ is axisymmetric,}
    \end{cases}
\end{equation}
where
\begin{equation}
    \lambda_{\dot\chi} := \{\hspace{-.125cm}\{\mathcal C,\mathcal H_{\dot\chi}\}\hspace{-.125cm}\}.
\end{equation}
The interpretation of $\dot\chi$ changes from a constant velocity in the zonally symmetric case to a constant angular velocity in the axisymmetric cases. The choice of $\mathcal C$ depends on the value of $\alpha$.

The specializations discussed for the non-symmetric flow domain case apply to the symmetric case, and we will not repeat them here.

\section{A numerical test of metriplecticity}\label{sec:test}

We illustrate the metriplectic construction with advQG in the presence of prescribed topography. The topographic term is useful numerically because it promotes the roll-up of the buoyancy front, but it does not alter the Hamiltonian character of the model \cite{Holm-etal-21, Beron-21-POFa}. This enters the Hamiltonian as a prescribed, time-independent field, \(\psi_0(\mathbf x) \in C^\infty(\Omega)\), namely,
\begin{equation}
    \mathcal H[q,a]
    =
    \frac{1}{2}
    \int_\Omega
    |\nabla\psi|^2
    +
    R^{-2}\psi^2
    -
    2\psi_0(\mathbf x)A(a)
    \,dxdy ,
\end{equation}
with variational derivatives
\begin{equation}
    \frac{\delta\mathcal H}{\delta q}
    =
    -\psi,
    \qquad
    \frac{\delta\mathcal H}{\delta a}
    =
    (\psi-\psi_0)\partial_a A .
\end{equation}
Together with the Lie--Poisson bracket \eqref{eq:bra2-advQG}, topographically forced advQG dynamics are controlled by
\begin{equation}
    \partial_t q
    +
    J(\psi,q-A(a))
    =
    J(A(a),\psi_0),
\end{equation}
with \(a\) advected by the flow.

For the numerical simulations we consider the IL$^0$QG member of the family, which corresponds to \(\alpha=0\), \(a_1=\psi_\sigma\), and \(A(a_1)=R^{-2}\psi_\sigma\); cf.~Section~\ref{sec:ILQG}. Irreversibility is introduced with respect to the entropy
\begin{equation}
    \mathcal S
    =
    \mathcal C_{1,0}
    =
    \int_\Omega q\,dxdy ,
\end{equation}
which gives
\begin{equation}
    \lambda
    =
    \{\hspace{-.125cm}\{\mathcal C_{1,0},\mathcal H\}\hspace{-.125cm}\}
    =
    \int_\Omega
    J_1 R^{-2}(\psi-\psi_0)
    \,dxdy .
\end{equation}
Choosing
\begin{equation}
    J_1
    =
    \frac{R^2}{\psi-\psi_0}
    \frac{F}{\sqrt{|\int_\Omega F\,dxdy|}},
\end{equation}
with \(F(\mathbf x,t)\in C^\infty(\Omega)\), the metriplectic IL$^0$QG equations become
\begin{align}
    \partial_t q
    +
    J(\psi,q-R^{-2}\psi_\sigma)
    &=
    R^{-2}J(\psi_\sigma,\psi_0)
    +
    \operatorname{sign}
    \left(
    \int_\Omega F\,dxdy
    \right)F,
    \\
    \partial_t\psi_\sigma
    +
    J(\psi,\psi_\sigma)
    &=
    \operatorname{sign}
    \left(
    \int_\Omega F\,dxdy
    \right)
    \frac{R^2F\psi}{\psi-\psi_0}.
\end{align}

We solve these equations on the \(f\)-plane periodic domain \(\Omega=\mathbb R^2/(2\pi R\mathbb Z)^2\) using Zeitlin's \cite{Zeitlin-91} finite-dimensional approximation. A simple choice of \(F\) that serves our purposes is to take it constant; in particular, we set \(F/f_0R^2=0.1\), so that the metriplectic term provides a weak irreversible perturbation of the Hamiltonian evolution. The prescribed topographic streamfunction is \(\psi_0(x,y)/f_0R^2=5\sin(2x/R)\cos(2y/R)\), and the initial thermal streamfunction is chosen as the perturbed front \(\psi_\sigma(x,y,0)/f_0R^2=2\tanh(10y/R-5-2.5\cos(2x/R))\). The initial potential vorticity corresponds to initialization from a state of rest, namely, \(q(x,y,0)=R^{-2}\psi_\sigma(x,y,0)\).

Zeitlin's approximation replaces the Fourier representation of the fields by an \(N\times N\) matrix representation whose commutator bracket preserves the Lie--Poisson structure. The infinite family of Casimirs depending on arbitrary functions of the advected field \(\psi_\sigma\) is replaced by the corresponding finite set generated by traces of powers of the matrix representation of \(\psi_\sigma\). Thus the unperturbed calculation is an exactly Hamiltonian finite-dimensional system with discrete invariants. The metriplectic simulations reported below use the same structure-preserving discretization for the Hamiltonian dynamics and add the weak irreversible contribution described above. This provides a controlled numerical test of the effect of the metriplectic perturbation, while extending structure-preserving discretizations to the full metriplectic setting remains an open research direction.

Figure~\ref{fig:fields} compares the Hamiltonian and metriplectic evolutions. In both cases the nonlinear dynamics rolls up the initial buoyancy front into coherent vortical structures. Although introduced as a weak perturbation, the metriplectic contribution enhances the deformation and mixing of the front, producing a stronger rearrangement of the buoyancy field while retaining the large-scale coherent motion.

\begin{figure}[t!]
    \centering
    \includegraphics[width=\linewidth]{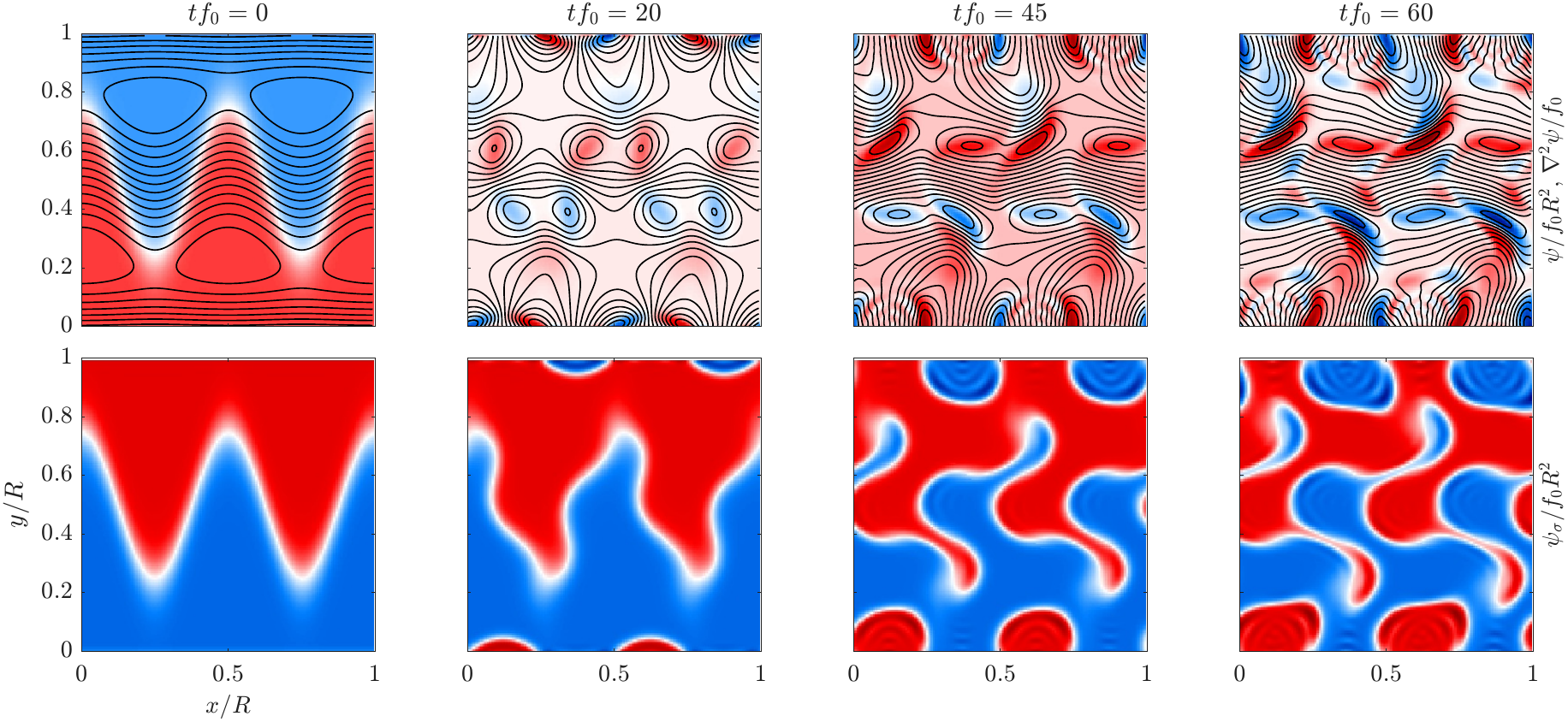}\\
    \includegraphics[width=\linewidth]{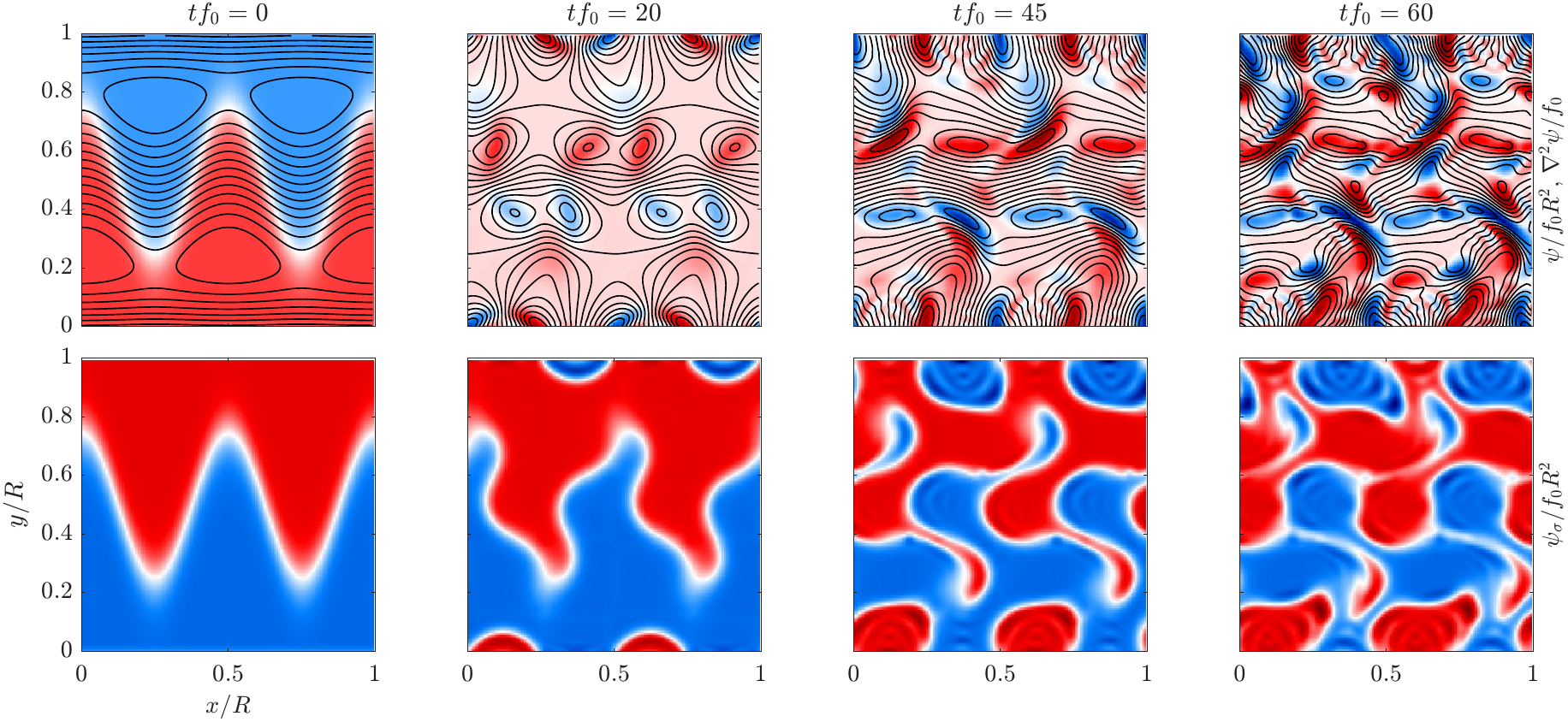}
    \caption{Evolution of the IL$^0$QG fields. Top block: Hamiltonian evolution computed with the structure-preserving Zeitlin discretization. Bottom block: evolution with the additional weak metriplectic contribution. In each block, the upper panels show relative vorticity \(\nabla^2\psi\) with streamlines \(\psi=\mathrm{const.}\) overlaid, and the lower panels show the buoyancy streamfunction \(\psi_\sigma\).}
    \label{fig:fields}
\end{figure}

Figure~\ref{fig:diagnostics} shows the corresponding diagnostics. The Hamiltonian Zeitlin system conserves the discrete invariants to numerical precision. In the metriplectic case the energy remains controlled while the Casimir invariants evolve, reflecting the selective irreversible rearrangement introduced by the metric bracket.

\begin{figure}[t!]
    \centering
    \includegraphics[width=.45\linewidth]{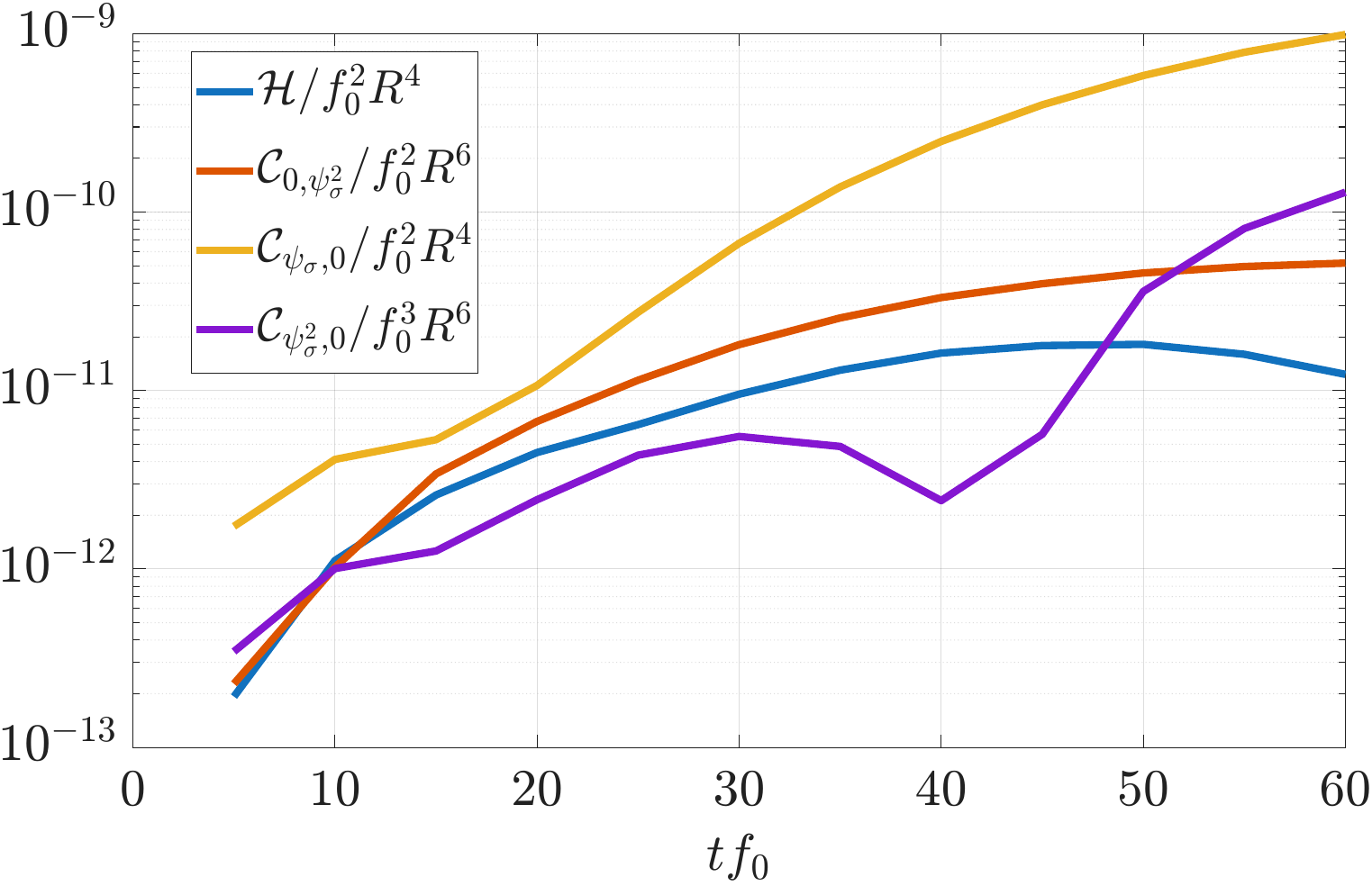}\,
    \includegraphics[width=.45\linewidth]{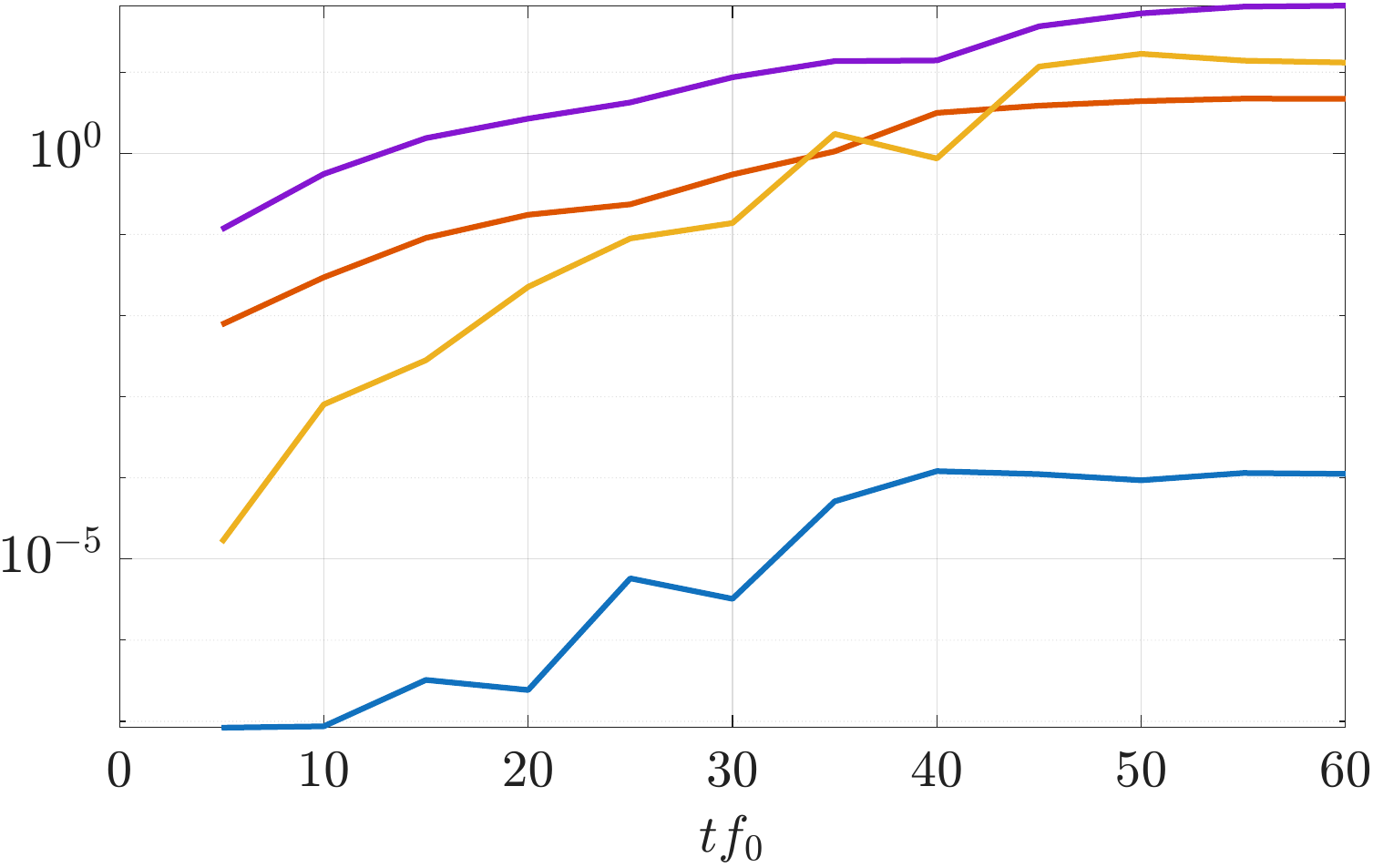}
    \caption{Diagnostic evolution for the Hamiltonian (left) and metriplectic (right) simulations, showing the relative changes of the Hamiltonian and representative discrete Casimir invariants.}
    \label{fig:diagnostics}
\end{figure}

\section{Conclusions}\label{sec:con}

In this paper, a general formulation of metriplectic dynamics is presented. The formulation is straightforward, involving the construction of the metriplectic four-bracket by multiplying two skew-symmetric brackets. This new formulation is then utilized to introduce thermodynamically consistent irreversibility in a generalized two-dimensional (2D) quasigeostrophic (QG) upper-ocean model with advected quantities. By design, the resulting dynamics ensure the conservation of internal energy and the generation of entropy, in accordance with the first and second laws of thermodynamics. The generalized 2D QG model with advected quantities includes the thermal QG model developed in the 1990s and the recently derived stratified thermal QG model. These models are characterized by producing Kelvin circulation along material loops, which may explain the tendency of thermal fronts to develop small-scale Kelvin–Helmholtz-like roll-up vortices. The production of Kelvin circulation in such models is associated with the lack of potential vorticity conservation, unlike the 3D QG model. The metriplectic dynamics formulation developed here enables the construction of specific types of irreversibility that can result, if desired, in dynamics that nearly preserve potential vorticity. Moreover, it allows for the incorporation of quite realistic forcing and dissipation.  The metriplectic construction was illustrated numerically for IL$^0$QG using a structure-preserving discretization of the Hamiltonian dynamics, showing the expected conservation of energy and irreversible evolution of the Casimirs. Further development of fully structure-preserving finite-dimensional metriplectic discretizations and applications of the proposed formulation beyond QG modeling provide natural directions for future investigation.

\section*{Acknowledgments}

F.J.B.V.\ thanks the Intercultural Outreach Initiative (IOI) in Isabela, Gal\'apagos, for their hospitality during the writing of a portion of this article. E.L.\ acknowledges financial support from NWO Grant No.\ VI.VIDI.213.070.

\section*{Author Declarations}

\subsection*{Conflict of Interest}

The authors have no conflicts of interest to disclose.

\subsection*{Author Contributions}

F.J.B.V.\ and E.L.\ contributed equally to the conceptual development and formulation of the theory. F.J.B.V.\ performed the numerical simulations with assistance from ChatGPT for code development and debugging. Both authors contributed to the analysis of the results and the writing of the manuscript.

\section*{Data and Software Availability}

The Julia and MATLAB codes used to produce the numerical results reported in this article are available from the authors upon reasonable request.

\appendix
\numberwithin{equation}{section}

\section{Irreversible HLQG dynamics}\label{app:qg}

The dynamics produced by the HLQG model develop on the \emph{invariant} subspace of advQG dynamics defined by $\{a = 0\}$. They are Hamiltonian \cite{Marsden-Weinstein-82}, with Hamiltonian given by \eqref{eq:H} where $q$ is given as in \eqref{eq:q}, except that $A(a) = 0$, and Poisson bracket given by \eqref{eq:PB} with $a = 0$.  The corresponding Casimir is 
\begin{equation}
    \mathcal C_\Phi = \int_\Omega\Phi(q)\,dxdy\quad\forall\Phi.
\end{equation}

Let $a(\mathbf x,t)$ be a \emph{passively} advected quantity.  The dynamics of $ u = (q,a)$ are Hamiltonian with the above Hamiltonian and (direct product, Lie--)Poisson bracket
\begin{equation}
    \{f,g\} := \int_\Omega qJ\left(\frac{\delta f}{\delta q},\frac{\delta g}{\delta q}\right) + aJ\left(\frac{\delta f}{\delta a},\frac{\delta g}{\delta a}\right)\,dxdy, 
\end{equation}
Its Casimir is given by
\begin{equation}
    \mathcal C_{\Phi_1,\Phi_2} = \int_\Omega \Phi_1(q) + \Phi_2(a)\,dxdy\quad\forall\Phi_1,\Phi_2.
\end{equation}

The evolution of $q$ is \emph{not} affected by $a$, even if the Hamiltonian
\begin{equation}
    \mathcal H[q,a] = \frac{1}{2}\int_\Omega |\nabla\psi|^2 + R^{-2}\psi^2 + 2a\,dxdy
\end{equation}
is considered. Assume that the flow domain does not have any symmetry, irreversible HLQG dynamics with respect to $\mathcal U = \mathcal H$ and $\mathcal S = C_{\Phi_1,\Phi_2}$ are constructed by choosing
\begin{equation}
    \{\hspace{-.125cm}\{f,g\}\hspace{-.125cm}\} = \int_\Omega J_1\left(\frac{\delta f}{\delta q}\frac{\delta g}{\delta a} - \frac{\delta f}{\delta a}\frac{\delta g}{\delta q}\right)\,dxdy
\end{equation}
for any $J(\mathbf x,t) \in C^\infty(\Omega)$. This leads to the following set motion equations
\begin{subequations}
    \begin{align}
        \partial_t q + J(\psi,q) &= \lambda J_1,\\
        \partial_t a + J(\psi,a) &= \lambda\psi J_1,
    \end{align}
\end{subequations}
where $\lambda := \{\hspace{-.125cm}\{\mathcal C_{\Phi_1,\Phi_2},\mathcal H\}\hspace{-.125cm}\}$. By direct computation, $\dot{\mathcal H} = 0$ and $\dot{\mathcal S} = \lambda^2 \ge 0.$ It is important to realize that the resulting equations are coupled through $\lambda$, which is a functional of $(q,a)$. In the symmetric flow domain case, which is irreversible with respect to $\mathcal{U} = \mathcal{H}_\chi$, the symmetry-induced Hamiltonian, the same equations produce the results except that on the right-hand side of the equation for $a$, $\psi$ is replaced by $\psi - \dot\chi y$ when the domain has zonal symmetry or by $\psi - \frac{1}{2}\dot\chi r^2$ when it has axial symmetry.

\bibliographystyle{alpha}
\bibliography{fot}

\end{document}